\newcommand{\LBAE}{\textsf{LBAE}}
\newcommand{\SCAE}{\textsf{SCAE}}
\newcommand{\occ}{\mathit{occ}}
\newcommand{\LCP}{\mathit{LCP}}
\newcommand{\lcp}{\mathit{lcp}}
\newcommand{\lce}{\mathit{lce}}
\newcommand{\Z}{\mathsf{Z}}
\newcommand{\B}{\mathsf{B}}
\newcommand{\BG}{\mathsf{BG}}
\newcommand{\period}{\mathsf{per}}
\newcommand{\border}{\mathsf{bord}}
\newcommand{\cover}{\mathsf{cov}}
\newcommand{\range}{\mathsf{range}}
\newcommand{\C}{\mathsf{C}}
\newcommand{\R}{\mathsf{R}}
\newcommand{\Rstar}{\mathsf{R}^\star}
\newcommand{\idx}{\mathit{idx}}
\newcommand{\clen}{\mathit{clen}}
\title{Shortest cover after edit} 
\author{Kazuki~Mitani}{Graduate School of Information Science and Technology, Hokkaido University, Japan}{kazukida199911204649@eis.hokudai.ac.jp}{}{}
\author{Takuya~Mieno}{Department of Computer and Network Engineering, University of Electro-Communications, Japan}{tmieno@uec.ac.jp}{https://orcid.org/0000-0003-2922-9434}{}
\author{Kazuhisa~Seto}{Faculty of Information Science and Technology, Hokkaido University, Japan}{seto@ist.hokudai.ac.jp}{https://orcid.org/0000-0001-9043-7019}{}
\author{Takashi~Horiyama}{Faculty of Information Science and Technology, Hokkaido University, Japan}{horiyama@ist.hokudai.ac.jp}{https://orcid.org/0000-0001-9451-259X}{}
\authorrunning{K.~Mitani~et al.} 
\keywords{string algorithm, border, cover, quasi-periodicity, dynamic string} 
\begin{document}

\maketitle

\begin{abstract}
  This paper investigates the (quasi-)periodicity of a string when the string is edited.
  A string $C$ is called a cover (as known as a quasi-period) of a string $T$
  if each character of $T$ lies within some occurrence of $C$.
  By definition, a cover of $T$ must be a border of $T$; that is,
  it occurs both as a prefix and as a suffix of $T$.
  In this paper, we focus on the changes in the longest border and the shortest cover of a string when the string is edited only once.
  We propose a data structure of size $O(n)$
  that computes the longest border and the shortest cover of the string in $O(\ell \log n)$ time
  after an edit operation (either insertion, deletion, or substitution of some string) is applied to the input string $T$ of length $n$,
  where $\ell$ is the length of the string being inserted or substituted.
  The data structure can be constructed in $O(n)$ time given string $T$.
\end{abstract}

\section{Introduction}\label{sec:intro}

Periodicity and repetitive structure in strings are important concepts in the field of stringology
and have applications in various areas, such as pattern matching and data compression.
A string $u$ is called a \emph{period-string} (or simply a \emph{period}) of string $T$ if $T = u^ku'$ holds for some positive integer $k$ and some prefix $u'$ of $u$.
While periods accurately capture the repetitive structure of strings, the definition is too restrictive.
In contrast, alternative concepts that capture a sort of periodicity with relaxed conditions have been studied.
A \emph{cover}~(a.k.a.~\emph{quasi-period}) of a string is a typical example of such a concept~\cite{ApostolicoE90,ApostolicoE93}.
A string $v$ is called a cover of $T$ if every character in $T$ lies within some occurrence of $v$.
In other words, $T$ can be written as a repetition of occurrences of $v$ that are allowed to overlap.
By definition, a cover of $T$ must occur as both a prefix and a suffix of $T$, and such string is called a \emph{border} of $T$.
Therefore, a cover of $T$ is necessarily a border of $T$.
For instance, $v = \mathtt{aba}$ is a cover for $S = \mathtt{abaababa}$, and $v$ is both a prefix and a suffix of $T$.
Then, the string $v = \mathtt{aba}$ of length $3$ can be regarded as an ``almost'' period-string in $S$
while the shortest period-string of $S$ is $\mathtt{abaab}$ of length $5$.
Thus, covers can potentially discover quasi-repetitive structures not captured by periods.
The concept of covers (initially termed quasi-periods) was introduced by Apostolico and Ehrenfeucht~\cite{ApostolicoE90,ApostolicoE93}.
Subsequently, an algorithm to compute the shortest cover offline in linear time was proposed by Apostolico et al.~\cite{ApostolicoFI91}.
Furthermore, an online and linear-time method was presented by Breslauer~\cite{Breslauer92}.
Gawrychowski et al. explored cover computations in streaming models~\cite{GawrychowskiRS19}.
In their problem setting, the computational complexity is stochastic.
Other related work on covers can be found in the survey paper by Mhaskar and Smyth~\cite{MhaskarS22}.

In this paper, we investigate the changes in the shortest cover of a string $T$ when $T$ is edited and design algorithms to compute it.
As mentioned above, the shortest cover of $T$ is necessarily a border of $T$, so we first consider how to compute borders when $T$ is edited.
To the best of our knowledge,
there is only one explicitly-stated result on the computation of borders in a dynamic setting:
the longest border of a string $S$
(equivalently, the smallest period of $S$) can be maintained in $O(|S|^{o(1)})$ time
per character substitution operation (Corollary 19 of~\cite{AmirBCK19}).
Also, although is not stated explicitly,
an $O(\log^3 n)$-time (w.h.p.) algorithm can be obtained by using the results on the \emph{PILLAR model} in dynamic strings~\cite{PILLAR_dynamic}.
We are unsure
whether their results can be applied to compute the shortest cover in a dynamic string.
Instead, we focus on studying the changes in covers when a \emph{factor} is edited only once.
We believe that this work will be the first step towards the computation of covers for a fully-dynamic string.
We now introduce two problems: the {\LBAE}~(longest border after-edit) query and the {\SCAE}~(shortest cover after-edit) query for the input string $T$ of length $n$.
The {\LBAE} query (resp., the {\SCAE} query) is, given an edit operation \emph{on the original string $T$} as a query,
to compute the longest border (resp., the shortest cover) of the edited string.
We note that, after we answer a query, the edit operation is discarded.
That is, the following edit operations are also applied to the original string $T$.
This type of problem is called the \emph{after-edit model}~\cite{Amir2017}.
Also, in our problems, the edit operation includes insertion, deletion, or substitution of strings of length one or more.
Our main contribution is designing an $O(n)$-size data structure that can answer both {\LBAE} and {\SCAE} queries
in $O(\ell \log n)$ time, where $\ell$ is the length of the string being inserted or substituted.
The data structures can be constructed in $O(n)$ time.

\paragraph*{Related Work on After-Edit Model.}
The after-edit model was formulated by Amir et al.~\cite{Amir2017}.
They proposed an algorithm to compute the \emph{longest common factor} (LCF) of two strings in the after-edit model.
This problem allows editing operations on only one of the two strings.
Abedin et al.~\cite{AbedinH0T18} subsequently improved their results.
Later, Amir et al.~\cite{AmirCPR20} generalized this problem
to a \emph{fully-dynamic} model and proposed an algorithm that maintains the LCF
in $\tilde{O}(n^{\frac{2}{3}})$ time\footnote{The $\tilde{O}(\cdot)$ notation hides poly-logarithmic factors.} per edit operation.
Charalampopoulos et al.~\cite{Charalampopoulos20} improved the maintenance time
to amortized $\tilde{O}(1)$ time with high probability per substitution operation.
Urabe et al.~\cite{UrabeNIBT18} addressed the problem of computing the \emph{longest Lyndon factor} (LLF) of a string in the after-edit model.
The insights gained from their work were later applied to solve the problem
of computing the LLF of a fully-dynamic string~\cite{AmirCPR20}.
Problems of computing the \emph{longest palindromic factor} and \emph{unique palindromic factors} in a string
were also considered in the after-edit model~\cite{FunakoshiNIBT21,FunakoshiM21,Mieno2023}.
\section{Preliminaries}\label{sec:pre}
\subsection{Basic Definitions and Notations}
\noindent\textbf{Strings.}\quad
Let $\Sigma$ be an \emph{alphabet}.
An element in $\Sigma$ is called a \emph{character}.
An element in $\Sigma^\star$ is called a \emph{string}.
The length of a string $S$ is denoted by $|S|$.
The string of length $0$ is called the \emph{empty string}
and is denoted by $\varepsilon$.
If a string $S$ can be written as a concatenation of
three strings $p, f$ and $s$,
i.e., $S = pfs$,
then $p, f$ and $s$ are called
a \emph{prefix}, a \emph{factor}, and a \emph{suffix} of $S$,
respectively.
Also, if $|p| < |S|$ holds, $p$ is called a \emph{proper} prefix of $S$.
Similarly, $s$ is called a proper suffix of $S$ if $|s| < |S|$ holds.
For any integer $i,j$ with $1\le i \le j \le |S|$,
we denote by $T[i]$ the $i$-th character of $S$,
and by $T[i.. j]$ the factor of $S$ starting at position $i$ and ending at position $j$.
For convenience, let $T[i'..j'] = \varepsilon$ for any $i', j'$ with $i' > j'$.
For two strings $S$ and $T$,
we denote by $\LCP(S, T)$ the \emph{longest common prefix} of $S$ and $T$.
Also, we denote by $\lcp(S, T)$ the length of $\LCP(S, T)$.
If $f = S[i.. i+|f|-1]$ holds, we say that $f$ \emph{occurs} at position $i$ in $S$.
Let $\occ_S(f) = \{i\mid f = S[i.. i+|f|-1]\}$ be the set of occurrences of $f$ in $S$.
Further let $\mathit{cover}_S(f) = \{p\mid p\in [i, i+|f|-1]\text{ for some }i\in\occ_S(f)\}$
be the set of positions in $S$ that are covered by some occurrence of $f$ in $S$.
A string $f$ is called a \emph{cover} of $S$ 
if $\mathit{cover}_S(f) = \{1,\dots , |S|\}$ holds.
A string $b$ is called a \emph{border} of a non-empty string $S$
if $b$ is both a proper prefix of $S$ and a proper suffix of $S$.
We say that $S$ has a border $b$ when $b$ is a border of $S$.
By definition, any non-empty string has a border $\varepsilon$.
If a string $S$ has a border $b$, integer $p = |S|-|b|$ is called a \emph{period} of $S$.
We sometimes call 
the smallest period of $S$ \emph{the} period of $S$.
Similarly, we call the longest border of $S$ \emph{the} border of $S$, and
the shortest cover of $S$ \emph{the} cover of $S$.
Also, we denote by $\period(S)$, $\border(S)$, and $\cover(S)$
the period of $S$, the border of $S$, and the cover of $S$, respectively.
The rational number $|S|/\period(S)$ is called the \emph{exponent} of $S$.
We say that $S$ is \emph{periodic} if $\period(S) \le |S|/2$.
A string $S$ is said to be \emph{superprimitive} if $\cover(S) = S$.

\noindent\textbf{After-edit Model.}\quad
The \emph{after-edit model} is,
given an edit operation on the input string $T$ as a query,
to compute the desired objects on the edited string $T'$
that is obtained by applying the edit operation to $T$.
Note that in the after-edit model,
each query, namely each edit operation, is discarded
after we finish computing the desired objects on $T'$,
so the next edit operation will be applied to the original string $T$.
In this paper, edit operations consist of
inserting a string
and substituting a factor with another string.
Note that factor substitutions contain factor deletions
since substituting a factor with the empty string $\varepsilon$
is identical to deleting the factor.
We denote an edit operation as $\phi(i, j, w)$
where $1 \le j \le |T|$, $1\le i \le j+1$ and $w \in \Sigma^\star$:
if $i \le j$, $\phi(i, j, w)$ means to substitute $T[i.. j]$ for $w$.
If $i = j + 1$, $\phi(i, j, w)$ means to insert $w$ just after $T[i-1]$.
In both cases, the resulting string is $T' = T[1.. i-1]wT[j+1..|T|]$
and thus $T'[i.. i+|w|-1] = w$.
For a given query $\phi(i, j, w)$, let $L_{i,j} = T[1.. i-1]$ and $R_{i,j} = T[j+1.. |T|]$.
We will omit the subscripts when they are clear from the context.
Thus, $T' = LwR$.
We consider the two following problems with the after-edit model:
\begin{itembox}[l]{\bf {\LBAE} (Longest Border After-Edit) query}
  \begin{description}
    \item[Preprocess:] A string $T$ of length $n$.
    \item[Query:] An edit-operation $\phi(i, j, w)$.
    \item[Output:] The longest border of $T' = L_{i,j}wR_{i,j}$.
  \end{description}
\end{itembox}

\begin{itembox}[l]{\bf {\SCAE} (Shortest Cover After-Edit) query}
  \begin{description}
    \item[Preprocess:] A string $T$ of length $n$.
    \item[Query:] An edit-operation $\phi(i, j, w)$.
    \item[Output:] The shortest cover of $T' = L_{i,j}wR_{i,j}$.
  \end{description}
\end{itembox}
In the following, we fix the input string $T$ of arbitrary length $n > 0$.
Also, we assume that the computation model in this paper is the word-RAM model with word size $\Omega(\log n)$.
We further assume that the alphabet $\Sigma$ is \emph{linearly-sortable},
i.e., we can sort $n$ characters from the input string in $O(n)$ time.

\subsection{Combinatorial Properties of Borders and Covers}\label{subsec:border}
This subsection describes known properties of borders and covers.
\paragraph*{Periodicity of Borders}
Let us consider partitioning the set $\mathcal{B}_T$ of borders of a string $T$ of length $n$.
Let $G_1, G_2, \ldots, G_m$ be the sets of borders of $T$
such that $\{G_1, G_2, \ldots, G_m\}$ is a partition of $\mathcal{B}_T$ and
for each set, all borders in the same set have the same smallest period.
Let $p_k$ be the period of borders belonging to $G_k$.
Without loss of generality, we assume that they are indexed so that $p_k > p_{k+1}$ for every $1\le k < m$.
We call $G_k$ the \emph{$k$-th group}.
Then, the following fact is known:
\begin{proposition}[\cite{KnuthMP77,KarpinskiRS97}]\label{prop:group}
  For the partition $\{G_1, G_2, \ldots, G_m\}$ of $\mathcal{B}_T$ defined above,
  the following statements hold:
  \begin{enumerate}
    \item For each $1 \le k \le m$, 
      the lengths of borders in the $k$-th group can be represented as
      a single arithmetic progression with common difference $p_k$.
    \item If a group contains at least three elements,
      the borders in the group except for the shortest one are guaranteed to be periodic.
    \item The number $m$ of sets is in $O(\log n)$.
  \end{enumerate}
\end{proposition}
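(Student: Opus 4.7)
The plan is to organise all borders of $T$ in decreasing length, $\beta_1 > \beta_2 > \cdots > \beta_s$, and use the identity $|\beta_i| - |\beta_{i+1}| = \period(\beta_i)$: every border of $\beta_i$ is also a border of $T$, so the longest proper border of $\beta_i$ must be $\beta_{i+1}$. Everything then follows from this observation combined with the Fine--Wilf periodicity lemma.

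For Part 1, I will take $b_1, b_2 \in G_k$ with $|b_1| < |b_2|$; because both are simultaneously prefixes and suffixes of $T$, $b_1$ is a border of $b_2$, so $|b_2| - |b_1|$ is a period of $b_2$. Applying Fine--Wilf to the two periods $p_k$ and $|b_2| - |b_1|$ of $b_2$, whose precondition follows from $|b_1| \geq p_k$, minimality of $p_k$ forces $p_k \mid (|b_2| - |b_1|)$. To rule out gaps in the arithmetic progression, for any non-longest $b_1 \in G_k$ with longest element $b^*$, I will show that the prefix of $T$ of length $|b_1| + p_k$ is a border of $T$ (it is a suffix of $b^*$ by the period-$p_k$ structure of $b^*$) and has smallest period exactly $p_k$ (a second Fine--Wilf application rules out any strictly smaller period, since otherwise $b_1$ would inherit it, contradicting minimality of $p_k$ on $b_1$). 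Part 2 then drops out immediately: a non-shortest member $b$ of $G_k$ has length at least $\ell_k + p_k \geq 2p_k$, where $\ell_k \geq p_k$ is the shortest length in $G_k$, so $b$ is periodic.

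For Part 3 I will first note that Part 1's no-gap step also implies the members of $G_k$ occupy a contiguous block of $\beta_1, \ldots, \beta_s$: any intermediate border in the interval $[\ell_k, L_k]$ would, by Fine--Wilf on the longest member of $G_k$, have length congruent to $L_k$ modulo $p_k$, and therefore belong to $G_k$ by the filled-in progression. I will then analyse a transition from block $G_k$ (shortest length $\ell_k$, longest length $L_k$, period $p_k$) to block $G_{k+1}$: its first member has length $\ell_k - p_k$. If $\ell_k \geq 2p_k$, that member has length at least $p_k$ and inherits the period $p_k$, so Fine--Wilf together with minimality of $p_{k+1}$ forces $p_{k+1}$ to be a proper divisor of $p_k$, hence $p_{k+1} \leq p_k/2$. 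Otherwise $\ell_k < 2p_k$, and the new longest length $\ell_k - p_k$ is strictly less than $\ell_k/2 \leq L_k/2$. Each transition therefore halves either the period or the longest-length marker; both quantities are bounded by $n$, so at most $2\log_2 n$ transitions exist, giving $m = O(\log n)$.

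The main obstacles I expect are verifying the Fine--Wilf preconditions cleanly in the no-gap step of Part 1 (especially when $|b^*|$ is only modestly larger than $p_k$) and confirming that consecutive groups along the border sequence correspond exactly to the paper's labelling by strictly decreasing $p_k$; once these two points are secured, the halving argument of Part 3 is essentially routine bookkeeping.
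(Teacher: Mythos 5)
The paper states Proposition~\ref{prop:group} as a known fact without proof, so there is no internal argument to compare against; I assess your proof on its own. Parts~1 and~2 are sound. For the modularity step, the Fine--Wilf precondition reduces to $|b_1|\ge p_k$, which holds because $p_k=\period(b_1)\le |b_1|$; the no-gap step correctly identifies the prefix of length $|b_1|+p_k$ as a border of $b^\star$ (the offset $|b^\star|-|b_1|-p_k$ is a multiple of $p_k$, so the period-$p_k$ structure of $b^\star$ makes that prefix also a suffix); and the minimality of the period propagates directly to the prefix $b_1$ without needing a second Fine--Wilf invocation, since a prefix of length $\ge q$ of a string with period $q$ again has period $q$. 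Part~2 then follows as you describe.

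Part~3, however, has a genuine gap. You assert that when $\ell_k\ge 2p_k$, the longest member of $G_{k+1}$, of length $L_{k+1}=\ell_k-p_k\ge p_k$, has periods $p_k$ and $p_{k+1}$, and that Fine--Wilf forces $p_{k+1}\mid p_k$. But the Fine--Wilf precondition is $L_{k+1}\ge p_k+p_{k+1}-\gcd(p_k,p_{k+1})$, and $\ell_k\ge 2p_k$ only yields $L_{k+1}\ge p_k$; when $p_{k+1}\nmid p_k$ the required threshold strictly exceeds $p_k$, so the lemma may not apply and the divisibility can fail. Concretely, for $T = \mathtt{abadababadaba}\,\mathtt{e}\,\mathtt{abadababadaba}$ the non-empty borders have lengths $13,7,3,1$ with smallest periods $6,4,2,1$, so every group is a singleton. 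At the first transition $\ell_1=13\ge 2p_1=12$, yet $p_2=4\nmid 6=p_1$ and $p_2>p_1/2$, while the longest length drops only from $13$ to $7>13/2$; thus neither of your two quantities halves in that step, and your ``each transition halves one of them'' invariant is false. The $O(\log n)$ bound is still true, but you must treat the subcase $p_k\le L_{k+1}<p_k+p_{k+1}-\gcd(p_k,p_{k+1})$ separately: there $L_{k+2}\le L_{k+1}-p_{k+1}<p_k\le L_k/2$, so the longest-length marker halves over two consecutive transitions instead of one, and an amortized (two-step) potential argument recovers $m=O(\log n)$.
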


\paragraph*{Properties of Covers}
The following lemma summarizes 
some basic properties of covers, which we will use later.

\begin{lemma}[\cite{Breslauer92,MooreS94}]\label{lem:cover_properties}
  For any string $T$, the following statements hold.
  \begin{enumerate}
    \item The cover $\cover(T)$ of $T$ is either $\cover(\border(T))$ or $T$ itself.
    \item $\cover(T)$ is superprimitive and non-periodic.
    \item 
      Let $v$ be a cover of $T$, and $u$ be a factor of $T$ which is shorter than $v$.
      Then $u$ is a cover of $T$ if and only if $u$ is a cover of $v$.
  \end{enumerate}
\end{lemma}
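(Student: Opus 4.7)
I would prove the three claims in the order (3), (1), (2), since the transitivity-style statement in (3) turns out to be the main tool for the others. For (3), the backward direction ($u$ covers $v$ $\Rightarrow$ $u$ covers $T$) is immediate: an occurrence of $v$ at position $q$ covering $p$ in $T$ refines to an occurrence of $u$ at some position $q'$ inside that copy of $v$, giving an occurrence of $u$ in $T$ covering $p$. For the forward direction, note that $u$ covering $T$ implies $u$ is a border of $T$, as is $v$; since borders of a string are linearly ordered by length, $u$ is a border of $v$, hence both a prefix and a suffix of $v$. Fix $p \in [1, |v|]$; the character $v[p]$ occurs in $T$ at both position $p$ and position $p + (|T|-|v|)$ (the prefix and suffix copies of $v$). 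Some occurrence of $u$ in $T$ at position $q$ covers the latter, and a short case analysis gives either $q \geq |T|-|v|+1$, so the occurrence lies wholly inside the suffix copy and translates to an occurrence of $u$ in $v$ covering $p$; or $q \leq |T|-|v|$, which forces $p \leq |u|-1$, so the prefix occurrence of $u$ in $v$ covers $p$.

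For (1), write $c = \cover(T)$ and $b = \border(T)$ and assume $c \neq T$, so $|c| \leq |b|$. Applying the same double-occurrence argument with $b$ in place of $v$ (using only that $b$ is a border of $T$) shows that $c$ covers $b$; hence $|\cover(b)| \leq |c|$. Suppose for contradiction that some $c'$ covers $b$ with $|c'| < |c|$. Since $c$ is itself a suffix of $b$ (being a border of $T$ strictly shorter than $b$) and $c'$ is a border of $b$, the same argument reused inside $b$ yields that $c'$ covers $c$; then (3) propagates this to $c'$ covering $T$, contradicting the minimality of $c$. Thus $c = \cover(b)$.

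Claim (2) then drops out of (3). For superprimitivity, if $\cover(c) \neq c$ for $c = \cover(T)$, then the shorter cover of $c$ would by (3) also cover $T$, contradicting minimality. For non-periodicity, if $c$ had a period $p \leq |c|/2$, it would have a border $b'$ of length $|c|-p \geq p$, and the two occurrences of $b'$ at positions $1$ and $p+1$ already cover all of $c$ (since $|b'| + p = |c|$ and $|b'| \geq p$); hence $b'$ is a cover of $c$ strictly shorter than $c$, which by (3) propagates to a shorter cover of $T$, a contradiction. The main technical hurdle is the case analysis at the heart of the forward direction of (3): carefully tracking which copy (prefix vs.\ suffix) of $v$ the witnessing occurrence of $u$ lies inside, and handling the boundary case where the occurrence straddles the end of the suffix copy. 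Once that single lemma is cleanly isolated, the remaining implications are short.
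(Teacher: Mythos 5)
The paper only cites this lemma (to Breslauer and to Moore and Smyth) and does not reprove it, so there is no in-paper argument to compare against. Your self-contained proof is correct, and proving~(3) first and then deriving~(1) and~(2) from it is the natural organization. The only step that needs care is the forward direction of~(3), and your case split is exactly right: a covering occurrence of $u$ at start position $q$ that hits position $p+|T|-|v|$ either has $q\ge|T|-|v|+1$ and translates into an occurrence of $u$ inside $v$, or has $q\le|T|-|v|$, which forces $p\le|u|-1$ so that the prefix occurrence of $u$ in $v$ (recall $u$ is a border, hence a prefix, of $v$) covers $p$. For~(2), observe that non-periodicity is already a consequence of superprimitivity: a period $p\le|c|/2$ makes the length-$(|c|-p)$ border of $c$ a proper cover of $c$, which is precisely the computation you carry out; stating it as ``superprimitive $\Rightarrow$ non-periodic'' would streamline the write-up.

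One small point in~(1): from $c\ne T$ you get only $|c|\le|b|$, yet your parenthetical justification that $c$ is a suffix of $b$ (``being a border of $T$ strictly shorter than $b$'') tacitly assumes $|c|<|b|$. When $|c|=|b|$ one has $c=b$, and then $c=\cover(b)$ is exactly the superprimitivity of $c$; your contradiction still goes through in that case (any $c'$ covering $b=c$ covers $c$, and~(3) then yields a cover of $T$ shorter than $c$), so this is a matter of exposition rather than a logical flaw, but the degenerate case should be acknowledged rather than swept under wording that presumes strict inequality.
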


\subsection{Algorithmic Tools}
This subsection shows algorithmic tools we will use later.

\paragraph*{Border Array and Border-group Array.}
The \emph{border array} $\B_T$ of a string $T$ is an array of length $n$
such that $\B_T[i]$ stores the length of the border of $T[1..i]$ for each $1 \le i \le n$.
Also, for convenience, let $\B_T[0] = 0$ for any string $T$.
There is a well-known \emph{online} algorithm for linear-time computation of the border array~(e.g., see~\cite{Gusfield97}).
While the worst-case running time of the algorithm is $O(n)$ per a character,
it can be made $O(\log n)$ by constructing $\B$ with the \emph{strict border array} proposed in~\cite{KnuthMP77}.
\begin{lemma}[\cite{Gusfield97,KnuthMP77}]\label{lem:border}
  For each $1 \le i \le n$,
  if we have $T[1.. i-1]$ and $\B_{T[1..i-1]}$,
  then we can compute $\B_{T[1..i]}$ in
  worst-case $O(\log n)$ time and amortized $O(1)$ time
  given the next character $T[i]$.
\end{lemma}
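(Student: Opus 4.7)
The plan is to give the standard KMP-style failure-function argument, adapted to the border-array notation of the paper. Since $\B_{T[1..i]}$ agrees with $\B_{T[1..i-1]}$ on the first $i-1$ entries, the only work is to compute the single new entry $\B_{T[1..i]}[i]$, i.e.~the length of the longest border of $T[1..i]$.

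The key structural observation I would use is that the lengths of the borders of a string form a chain under iteration of the border array: for any string $S$ of length $m \ge 1$, the set of border lengths of $S$ is exactly $\{\B_S[m],\ \B_S[\B_S[m]],\ \B_S[\B_S[\B_S[m]]],\ \ldots,\ 0\}$. This is because every border of a border of $S$ is itself a border of $S$, while the longest border strictly shorter than a given border of $S$ is the border of that border. Consequently, every border of $T[1..i]$ of positive length is obtained from some border $T[1..k]$ of $T[1..i-1]$ by appending one character, and that appended character must equal $T[i]$; to find the longest such $k+1$, it suffices to scan the chain of iterated border lengths starting at $j_0 := \B_{T[1..i-1]}[i-1]$ and stop at the first $j_t$ with $T[j_t+1] = T[i]$, setting $\B_{T[1..i]}[i] := j_t + 1$ (or $0$ if the chain is exhausted).

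For the running time I would use the standard amortized argument with potential $\Phi_i := \B_{T[1..i]}[i] \ge 0$. Each unsuccessful step of the fallback loop strictly decreases the current candidate $j$, and hence eventually contributes a drop of at least $1$ in $\Phi$ relative to $\Phi_{i-1}$, while the single successful comparison (if any) increases the new value by at most $1$ over the final $j_t$. Thus the number of comparisons performed when reading $T[i]$ is bounded by $1 + (\Phi_{i-1} - \Phi_i)$, and summing over $i = 1, \ldots, n$ gives a total of at most $n + \Phi_0 - \Phi_n \le n$ comparisons. Dividing by $n$ yields the claimed amortized $O(1)$ bound per character.

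I do not expect any real obstacle: the only step that requires care is justifying that iterating the border-array pointer from $\B_{T[1..i-1]}[i-1]$ really enumerates all border lengths of $T[1..i-1]$ in decreasing order, which is exactly the chain property mentioned above. The rest is a routine KMP-style potential argument, and the lemma then follows immediately once the single entry $\B_{T[1..i]}[i]$ has been computed in amortized $O(1)$ time.
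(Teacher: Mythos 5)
Your proof is correct and is exactly the standard KMP failure-function argument; the paper itself gives no proof for this lemma and simply cites Gusfield's textbook, where the same chain-of-borders computation and potential-function amortization appear. Nothing to add.
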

Assume that we already have a string $T$ of length $n$ and its border array $\B_T$.
Then, given a prefix $L$ of $T$ and any string $w$ of length $\ell$,
let $\beta(n, \ell)$ be the computation time to obtain the border array of $Lw$ from $\B_T$.
Now $\beta(n, \ell) \in O(\ell\log n)$ holds due to Lemma~\ref{lem:border}, 

Next, we introduce a data structure closely related to the border array.
The \emph{border-group array} $\BG_T$ of a string $T$ is an array of length $n$
such that, for each $1 \le i \le n$,
$\BG_T[i]$ stores the length of the shortest border of $T[1..i]$ whose smallest period equals $\period(T[1..i])$
if such a border exists, and $\BG_T[i] = i$ otherwise.
By definition,
if $T[1.. i]$ is a border of $T$ and it belongs to group $G_k$,
then $\BG_T[i]$ stores the first (the smallest) term of the arithmetic progression
representing the lengths of borders in $G_k$.
This is why we named $\BG_T$ the border-group array.
Also, the common difference $p_k = i - \B_T[i]$ can be obtained from $\B_T$ if $\BG_T[i] \ne i$.
See Figure~\ref{fig:BG} for an example.
\begin{figure}[tb]
  \center{\includegraphics[width=1.0\linewidth]{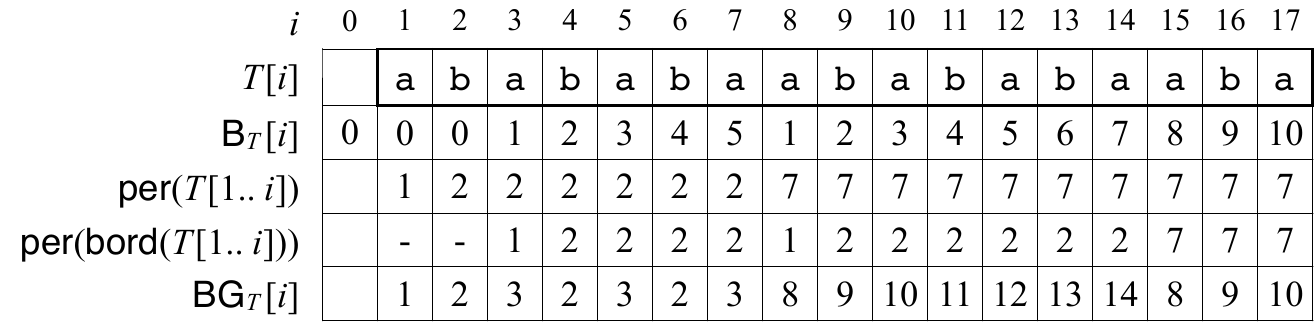}}
  \caption{An example of a border array and a border-group array.
    For position $i = 7$, the period of $T[1.. 7]$ is $2$.
    All borders of $T[1.. 7]$ are 
    $\mathtt{ababa}$, $\mathtt{aba}$, $\mathtt{a}$, and $\varepsilon$.
    Also, their smallest periods are $2$, $2$, $1$, and $0$, respectively.
    Thus $\BG_{T}[7] = |\mathtt{aba}| = 3$.
    For position $i = 8$, the period of $T[1.. 8]$ is $7$.
    Any border of $T[1.. 8]$ does not have period $7$, and thus $\BG_{T}[8] = i = 8$.
  } \label{fig:BG}
\end{figure}
We can compute the border-group array in linear time in an online manner together with $\B_T$.
Before proving it, we note a fact about periods.
\begin{proposition}\label{prop:period}
  If $u$ is a factor of $v$, then $\period(u) \le \period(v)$.
\end{proposition}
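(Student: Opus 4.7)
The plan is to show directly that every period of $v$ restricts to a period of $u$ (when the period is at most $|u|$), and otherwise to rely on the trivial upper bound $\period(u) \le |u|$.

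Let $p = \period(v)$, so $v[k] = v[k+p]$ for all $1 \le k \le |v|-p$. Write $u = v[a.. b]$ for some $1 \le a \le b \le |v|$, so that $u[i] = v[a+i-1]$ for $1 \le i \le |u|$.

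First I would dispose of the easy case $p \ge |u|$: since $\period(u) \le |u|$ by definition, we immediately get $\period(u) \le |u| \le p = \period(v)$.

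Otherwise $p < |u|$, and I would verify that $p$ itself is a period of $u$. For any $i$ with $1 \le i \le |u|-p$, both $a+i-1$ and $a+i-1+p$ lie in $[a, b] \subseteq [1, |v|]$, so the periodicity of $v$ gives $u[i] = v[a+i-1] = v[a+i-1+p] = u[i+p]$. Hence $p$ is a period of $u$, and therefore $\period(u) \le p = \period(v)$.

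No step here is really an obstacle; the only thing to be careful about is handling the degenerate case $p \ge |u|$, where one cannot literally shift by $p$ inside $u$, so the inequality must be obtained from the trivial bound $\period(u) \le |u|$ rather than from a shifting argument.
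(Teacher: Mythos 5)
Your proof is correct. One small point of care you handled well: the periodicity of $v$ at position $a+i-1$ requires $a+i-1+p \le |v|$, which holds because $a+i-1+p \le b \le |v|$ when $i \le |u|-p$, and the degenerate case $p \ge |u|$ (where the shifting argument is vacuous) falls back on the trivial bound $\period(u) \le |u|$. Note that the paper states this proposition as a known fact without supplying a proof, so there is no paper argument to compare against; your elementary shifting argument is the standard way to establish it and would serve as a perfectly good proof if one were included.
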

\begin{lemma}\label{lem:bordergroup}
  For each $1 < i \le n$,
  if we have $T[1.. i-1]$, $\BG_{T[1..i-1]}$, and $\B_{T[1.. i]}$,
  then we can compute $\BG_{T[1..i]}$ in $O(1)$ time
  given the next character $T[i]$.
\end{lemma}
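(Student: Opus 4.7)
The plan is to piggy-back the computation of $\BG_{T[1..i]}$ on top of the online border-array update guaranteed by Lemma~\ref{lem:border}. Write $s = T[1..i-1]$, $s' = T[1..i]$, and let $B = \B_{s'}[i]$ together with $b = T[1..B]$ denote the longest proper border of $s'$ (with $B = 0$ meaning no proper border exists). Given $B$, which Lemma~\ref{lem:border} produces in amortized $O(1)$ time, I plan to establish the update rule: $\BG_{s'}[i] = \BG_s[B]$ whenever $B > 0$ and $\B_s[B] = 2B - i$, and $\BG_{s'}[i] = i$ otherwise. Because $\B_s$, $\BG_s$, and $B$ are all available by induction and by Lemma~\ref{lem:border}, the rule itself costs only $O(1)$ further work, giving the claimed amortized $O(1)$ bound.

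The first step is to recognise the test $\B_s[B] = 2B - i$ as exactly the condition $\period(b) = \period(s')$, i.e., that $b$ belongs to the same border-group as $s'$. This follows from the identities $\period(s') = i - B$ and $\period(b) = B - \B_s[B]$ (using the convention $\B_T[0] = 0$ to absorb edge cases), together with Proposition~\ref{prop:period}, which ensures $\period(b) \le \period(s')$ whenever $B$ is large enough ($2B \ge i$) for $\period(s')$ to qualify as a period of $b$; conversely, $\period(b) = \period(s')$ forces $2B \ge i$, so the equation $\B_s[B] = 2B - i$ is non-negative and well-defined precisely in the relevant regime.

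The second step handles the two cases separately. If $\period(b) = \period(s')$, then the proper borders of $s'$ sitting in the group of $s'$ are exactly $b$ together with those proper borders of $b$ that have period $\period(b)$; by definition the shortest member of this set is $\BG_s[B]$, with the convention $\BG_s[B] = B$ covering the subcase in which $b$ has no qualifying shorter border. If $\period(b) \ne \period(s')$, then $\period(b) < \period(s')$; any other proper border $b'$ of $s'$ is strictly shorter than $b$ and hence a prefix (and factor) of $b$, so Proposition~\ref{prop:period} gives $\period(b') \le \period(b) < \period(s')$. Combined with $\period(b) \ne \period(s')$ this shows no proper border of $s'$ achieves period $\period(s')$, so $\BG_{s'}[i] = i$.

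The main obstacle is exactly this ``skip-over'' scenario just ruled out: when $b$ itself fails to be in the group of $s'$, one must be sure no even shorter border silently joins that group. Once this prefix-period inheritance argument via Proposition~\ref{prop:period} is in place, everything else reduces to a constant-time table lookup riding on top of the border-array update from Lemma~\ref{lem:border}, which closes the amortized $O(1)$ bound.
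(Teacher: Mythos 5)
Your proof is correct and takes essentially the same approach as the paper's: the test $\B_s[B] = 2B - i$ is just a rewritten form of the paper's condition $p = q$ (both express $\period(T[1..B]) = \period(T[1..i])$), and the two cases are resolved identically via Proposition~\ref{prop:period}. You simply spell out in more detail why the recursion $\BG_{s'}[i] = \BG_s[B]$ is sound in the equal-period case, which the paper asserts more tersely.
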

\begin{proof}
  By definition, $\BG_{T[1..1]} = [1]$.
  Now let $p = i - \B_{T[1.. i]}[i]$ and $q = \B_{T[1.. i]}[i] - \B_{T[1.. i]}[\B_{T[1.. i]}[i]]$,
  meaning that $p = \period(T[1..i])$ and $q = \period(\border(T[1..i]))$.
  If $p = q$, then we set $\BG_{T[1..i]}[i] = \BG_{T[1..i-1]}[\B_{T[1..i]}[i]]$ since
  $\period(T[1.. i]) = \period(\border(T[1..i]))$.
  Otherwise, $\period(T[1.. i]) > \period(\border(T[1..i]))$,
  and thus, the period of \emph{any} border of $T[1.. i]$ is smaller than $\period(T[1..i])$
  by Proposition~\ref{prop:period}.
  Hence we set $\BG_{T[1..i]}[i] = i$.
  The running time of the algorithm is $O(1)$.
\end{proof}

\paragraph*{Longest Common Extension Query.}
The \emph{longest common extension} query (in short, \emph{LCE} query) is,
given positions $i$ and $j$ within $T$,
to compute $\lcp(T[i.. |T|], T[j.. |T|])$.
We denote the answer of the query as $\lce_T(i, j)$.
We heavily use the following result in our algorithms.
\begin{lemma}[E.g.,~\cite{BenderF00}]\label{lem:lce}
  We can answer any LCE query in $O(1)$ time after $O(n)$-time and space preprocessing on the input string $T$.
\end{lemma}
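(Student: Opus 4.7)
The plan is to reduce each LCE query to a single range minimum query (RMQ) on a statically preprocessed array of length $n$. First, I would construct in $O(n)$ total time the suffix array $\mathit{SA}$ of $T$, its inverse permutation $\mathit{SA}^{-1}$, and the longest common prefix array $\mathit{LCP}$, where $\mathit{LCP}[k]$ stores $\lcp$ of the suffixes ranked $k-1$ and $k$ in $\mathit{SA}$. This is possible under the paper's assumption that $\Sigma$ is linearly-sortable, by combining any linear-time suffix array construction with Kasai et al.'s $O(n)$-time LCP construction from the suffix array.

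Next I would exploit the classical identity that for any two distinct positions $i, j$, if $a = \min(\mathit{SA}^{-1}[i], \mathit{SA}^{-1}[j])$ and $b = \max(\mathit{SA}^{-1}[i], \mathit{SA}^{-1}[j])$, then
\[
  \lce_T(i,j) \;=\; \min\{\mathit{LCP}[k] : a+1 \le k \le b\},
\]
because in lexicographic order the LCP of any two suffixes is bottlenecked by the minimum pairwise LCP along the intervening consecutive pairs. Thus each LCE query becomes one RMQ on $\mathit{LCP}$. I would then preprocess $\mathit{LCP}$ for constant-time RMQ using the standard Bender--Farach-Colton construction, which gives $O(1)$ query time after $O(n)$-time and space preprocessing (splitting $\mathit{LCP}$ into blocks of size $\tfrac{1}{2}\log n$, tabulating within-block answers indexed by the block's $\pm 1$ difference pattern via a Cartesian-tree reduction, and using a sparse table over block minima for cross-block queries).

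The main obstacle is essentially just invoking the right black boxes correctly: a linear-time suffix and LCP array construction on a linearly-sortable alphabet, and a linear-space constant-time RMQ structure. Both are classical and compose cleanly, so no new combinatorial argument is required; the lemma follows by composition of these two ingredients.
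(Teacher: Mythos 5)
Your proposal is correct and reproduces the standard construction that the paper invokes by citation (suffix array plus inverse suffix array plus LCP array, then LCE reduced to RMQ, with the Bender--Farach-Colton $O(n)$-space, $O(1)$-query RMQ structure). The paper does not spell out a proof for this lemma---it is a well-known black box---and your composition of linear-time suffix/LCP array construction on a linearly-sortable alphabet with constant-time RMQ is exactly the intended argument.
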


\paragraph*{Prefix Table.}
The \emph{prefix table} $\Z_S$ of a string $S$ of length $m$ is an array of length $m$
such that $\Z_S[i] = \LCP(S, S[i..m])$ for each $1 \le i \le m$.
\begin{lemma}[\cite{MainL84,Gusfield97}]
  Given a string $S$ of length $m$ over a general unordered alphabet,
  we can compute the prefix table $\Z_S$ in $O(m)$ time\footnote{The algorithm described in~\cite{Gusfield97} is known as \emph{Z-algorithm}, so we use $\Z$ to represent the prefix table.}.
\end{lemma}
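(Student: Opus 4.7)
The plan is to maintain, while scanning positions $i = 2, 3, \ldots, m$ left to right, a single auxiliary pair $(l, r)$ that records the interval $[l, r]$ of the rightmost prefix-match discovered so far: that is, $r$ is maximal over all previously processed positions $i'$ with $\Z_S[i'] > 0$, and $l$ is the corresponding starting position (so $S[l.. r] = S[1.. r-l+1]$). Initially $l = r = 0$. This pair is the only state kept besides the output array $\Z_S$ itself.

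At position $i$, I would branch on whether $i \le r$. If $i > r$, no previous match helps, so I compute $\Z_S[i]$ by the naive character-by-character comparison $S[1] = S[i]$, $S[2] = S[i+1]$, $\ldots$, stopping at the first mismatch or at the end of $S$. If $\Z_S[i] > 0$, update $l \gets i$ and $r \gets i + \Z_S[i] - 1$. If $i \le r$, let $k = i - l + 1$; by the match on $[l, r]$, the substring starting at $i$ agrees with $S[k..]$ for at least $r - i + 1$ characters. So if $\Z_S[k] < r - i + 1$, then by unicity of extension I can set $\Z_S[i] = \Z_S[k]$ without any character comparison. Otherwise, I know $S[i.. r] = S[1.. r - i + 1]$ already, and I only need to resume comparisons at $S[r - i + 2]$ vs.\ $S[r + 1]$ to extend the match as far as possible; then I update $l \gets i$ and $r$ to the new right endpoint. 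Only equality tests on characters are used, so the procedure works over a general unordered alphabet.

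Correctness follows from the definition of $\Z_S$ and the invariant $S[l..r] = S[1..r-l+1]$: the case $\Z_S[k] < r - i + 1$ forces a mismatch strictly inside $[l, r]$, which by the invariant transfers verbatim to position $i$; the other case only exploits matches that are already guaranteed by the invariant, deferring genuine new comparisons to positions $> r$. The running time analysis is where I would be careful, but it is the standard amortized argument: each \emph{successful} character comparison strictly increases $r$, and $r$ never decreases and is bounded by $m$, giving $O(m)$ successful comparisons in total; each invocation performs at most one \emph{failing} comparison, contributing $O(m)$ overall. All bookkeeping per position is $O(1)$, so the total time is $O(m)$, as claimed.

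The main obstacle is really just making the case analysis airtight: in particular, justifying that when $\Z_S[k] \ge r - i + 1$ one truly must resume comparisons (rather than being able to copy) and that when $\Z_S[k] < r - i + 1$ one truly must not, since a wrong split would either sacrifice linearity or corrupt $\Z_S[i]$. This hinges on the sharp statement that the match on $[l, r]$ is known to end at a mismatch $S[r - l + 2] \ne S[r + 1]$ (whenever $r < m$), which I would state and use explicitly to transfer mismatch information from position $k$ to position $i$.
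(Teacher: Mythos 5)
Your proposal is a correct description of the standard Z-algorithm and its amortized analysis, which is exactly the construction the paper cites from Gusfield (the paper itself gives no proof, just the citation). The window invariant $S[l..r] = S[1..r-l+1]$, the split on $\Z_S[k] < r-i+1$ versus $\Z_S[k] \ge r-i+1$, and the charging of successful comparisons to increases in $r$ are all as in the reference; the only-equality-tests observation correctly justifies the ``general unordered alphabet'' clause.
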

We emphasize that this linear-time algorithm does not require linearly-sortability of the alphabet.

\paragraph*{Internal Pattern Matching.}
The \emph{internal pattern matching} query (in short, \emph{IPM} query) is,
given two factors $u, v$ of $T$ with $|v| \le 2|u|$,
to compute the occurrences of $u$ in $v$.
The output is represented as an arithmetic progression due to the lengths constraint and periodicity~\cite{KociumakaRRW13}.
If $u$ occurs in $v$,
we denote by $\mathsf{rightend}(u, v)$ the ending position of the rightmost occurrence of $u$ in $v$.
\begin{lemma}[\cite{KociumakaRRW13}]\label{lem:IPM}
  We can answer any IPM query in $O(1)$ time after $O(n)$-time and space preprocessing on the input string $T$,
\end{lemma}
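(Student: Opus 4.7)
The plan is to exploit the periodicity forced by the length bound $|v|\le 2|u|$ in order to reduce a query to the computation of only constantly many distinguished occurrences of $u$ inside $v$. Concretely, if $u$ has at least two occurrences in $v$, any two of them overlap by at least $2|u|-|v|\ge 0$ positions, and a standard application of the Fine--Wilf theorem shows that the full set of occurrences forms an arithmetic progression whose common difference equals $\period(u)$. Therefore the output is completely determined by the triple: leftmost occurrence of $u$ in $v$, rightmost occurrence of $u$ in $v$, and $\period(u)$. The goal is to retrieve each of these in $O(1)$ time after $O(n)$ preprocessing on $T$.

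The first step is to preprocess $T$ with an LCE structure (Lemma~\ref{lem:lce}) together with an identification scheme that lets us refer to any factor of $T$ in $O(1)$ (e.g.\ via its endpoints in the suffix array together with weighted-ancestor-style lookups on the suffix tree). LCE then lets us verify any candidate occurrence of $u$ against $v$ in $O(1)$. Next, I would precompute for every ``long enough'' factor of $T$ its period; this is standard since the set of runs of $T$ has size $O(n)$, and combined with LCE the period of any factor $u$ can be read off in $O(1)$ from the single run it is (if at all) contained in. Short factors are handled by bit-parallel lookup tables of total size $O(n)$.

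The substantive step is locating the leftmost and rightmost occurrences of $u$ in $v$ in $O(1)$. For this, I would use a difference-cover / anchor construction: choose a set $A$ of $O(n/\tau)$ anchor positions in $T$ with $\tau=\Theta(\log n)$ so that every sufficiently long factor of $T$ contains a canonical anchor at a predictable offset. For each anchor $a\in A$ store, sorted by the position within $T$, the list of positions whose ``canonical suffix'' matches the one hanging off $a$; with appropriate fingerprinting this list can be queried in $O(1)$ via predecessor/successor. Using the anchor inside $v$ gives a constant number of candidate positions to test with LCE. Factors of length below $\tau$ are resolved by direct precomputed tables, and the total preprocessing space and time remain $O(n)$.

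The main obstacle is the short-factor case: the anchor scheme only works once $|u|$ exceeds the anchor spacing $\tau$, so one must separately design $O(n)$-size tables indexed by machine-word encodings of short patterns and short windows of $T$, and argue that these tables can be built in $O(n)$ time using linear-sortability and bit packing. Once this is in place, combining the period, the leftmost occurrence and the rightmost occurrence yields the arithmetic progression in $O(1)$ per query.
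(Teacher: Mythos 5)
This lemma is stated in the paper as a cited black-box result from Kociumaka, Radoszewski, Rytter, and Walen (SODA~2015); the paper does not include a proof of it, so there is no internal argument to compare your sketch against.

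Judged on its own merits, your reduction of the query to reporting the leftmost occurrence, the rightmost occurrence, and $\period(u)$ is correct, and the claim that these three values determine the entire output when $|v| \le 2|u|$ is true (though loosely argued: an overlap of $2|u|-|v|\ge 0$ alone does not trigger Fine--Wilf; one has to observe that three or more occurrences squeeze into a window of length at most $|u|$, forcing a period at most $|u|/2$, and then conclude the arithmetic-progression structure). The genuine gap is the anchor step. Declaring that one can ``choose a set $A$ of $O(n/\tau)$ anchor positions so that every sufficiently long factor contains a canonical anchor at a predictable offset'' is precisely the hard part of the construction, not an off-the-shelf primitive: a fixed modular spacing does not place anchors at a consistent offset across different occurrences of the same factor, and obtaining such a \emph{synchronizing} selection is the technical core of the IPM result (and of the later notion of string synchronizing sets). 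In addition, a single scale $\tau=\Theta(\log n)$ cannot serve all pattern lengths while keeping the per-anchor candidate lists, which must be keyed by a length-dependent local context, to $O(n)$ total size; the cited construction uses a geometric hierarchy of length levels precisely to control this. The constant-time predecessor/successor lookup on these lists is also asserted rather than justified (one needs something like fusion-tree packing on buckets whose sizes are bounded per level). In short, the reduction to three quantities is fine, but the sketch states, rather than constructs, exactly the pieces that make the lemma non-trivial.
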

\section{Longest Border After Edit} \label{sec:border}

This section proposes an algorithm to solve the {\LBAE} problem.
In the following, we assume that $|L| \ge |R|$ and $|w| \le |L|/2$ for a fixed query $\phi(i, j, w)$.
Because,
when $|L| < |R|$,
running our algorithm on the reversal inputs can answer {\LBAE} queries without growing complexities.
Also, if $|w| > |L|/2$, then $|w| > |T'|/5$ holds since $T' = LwR$ and $|L| \ge |R|$.
We can obtain the border of $T'$ in $O(|T'|) = O(|w|)$ time by computing the border array of $T'$ from scratch.

We compute the border of $T'$ in the following two steps.
\textbf{Step 1}: Find the longest border of $T'$ which is longer than $R$.
\textbf{Step 2}: Find the longest border of $T'$ of length at most $|R|$ if nothing is found in step 1.
Step 2 can be done in constant time
by pre-computing all borders of $T$ and the longest border of $T$ of length at most $k$ for each $k$ with $1\le k \le n$.
Thus we focus on step 1, i.e., how to find the longest border of $T'$ which is longer than $R$.
We observe that such a border is the concatenation of some border of $Lw$ and $R$~(see Figure~\ref{fig:borderofLw}).
\begin{figure}[tb]
  \center{\includegraphics[width=0.5\linewidth]{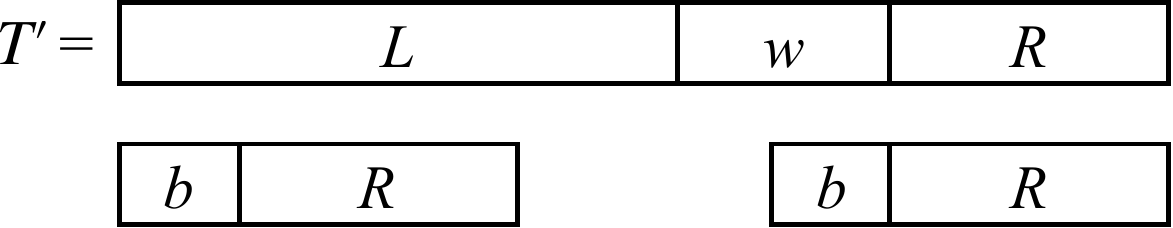}}
  \caption{A border of $T'$ which is longer than $R$ is written as $bR$ where $b$ is a border of $Lw$.
  } \label{fig:borderofLw}
\end{figure}
By pre-computing the border array $\B_T$,
the border array $\B_{Lw}$ of $Lw$ can be computed in $O(\beta(n, |w|))$ time starting from $\B_L = \B_T[1.. |L|]$~(Lemma~\ref{lem:border}).
Let $b_{Lw}$ be the border of $Lw$.
There are two cases:
(i) $|b_{Lw}| \le |w|$ or (ii) $|b_{Lw}| > |w|$.
We call the former case the short border case and the latter case the long border case.

\subsection{Short Border Case} \label{subsec:shortborder}

In this case, the length of the border of $T' = LwR$ is at most $|b_{Lw}R| \le |wR| \le |Lw|$,
so its prefix-occurrence ends within $Lw$.
Also, for any border $b$ of $Lw$, string $bR$ is a border of $T'$ if and only if  $\lce_{T'}(|b|+1, |Lw|+1) = |R|$ holds.
Thus, we pick up each border $b$ of $Lw$ in descending order of length
and check whether $bR$ is a border of $T'$ by computing $\lce_{T'}(|b|+1, |Lw|+1)$.
Since $Lw$ has at most $|w|$ borders, constant-time LCE computation results in a total of $O(|w|)$ time.
If $|b| + |R| \le |L|$ then we can use the LCE data structure on the original string $T$ since $\lce_{T'}(|b|+1, |Lw|+1) = \lce_T(|b|+1, j+1)$ holds.
Otherwise, we may compute the longest common prefix of $w$ and some suffix of $R$, which cannot be computed by applying LCE queries on $T$ na\"ively.
To resolve this issue, we compute the prefix table $\Z_{W}$ of $W$ in $O(|W|) = O(|w|)$ time
where $W = w\cdot R[|R|-|w|+1.. |R|]$ is the concatenation of $w$ and the length-$|w|$ suffix of $R$.
Note that $|R| >  |w|$ holds here since $|R| > |L| - |b| \ge |L| - |w| \ge |w|$ by the assumptions in this case.
Then the longest common prefix of $w$ and any suffix of $R$ of length at most $|w|$ is obtained in constant time, and so is $\lce_{T'}(|b|+1, |Lw|+1)$.
Therefore, we can compute $\lce_{T'}(|b|+1, |Lw|+1)$ for all borders $b$ of $Lw$ in a total of $O(\beta(n, |w|) + |w| + \log n)$ time.

\subsection{Long Border Case}\label{subsec:longborder}

Firstly, we give some observations for the long border case; $|b_{Lw}|  > |w|$.
If $|b_{Lw}|  \le |L|$, then $w = L[|b_{Lw}|  - |w| + 1.. |b_{Lw}| ]$ holds.
If $|b_{Lw}|  > |L|$, then the period $p_{Lw}$ of $Lw$ is $p_{Lw} = |Lw|-|b_{Lw}|  < |Lw| - |L| = |w|$.
Let $k$ be the smallest integer such that $kp_{Lw} \ge |w|$.
Since $k \ge 2$, $kp_{Lw} \le 2(k-1)p_{Lw} < 2|w| \le |L|$ holds.
Thus $w = L[|L|-kp_{Lw} + 1.. |L|-kp_{Lw}+|w|]$ holds~(see also Figure~\ref{fig:w_occursinL}).
\begin{figure}[tb]
  \center{
    \includegraphics[width=\linewidth]{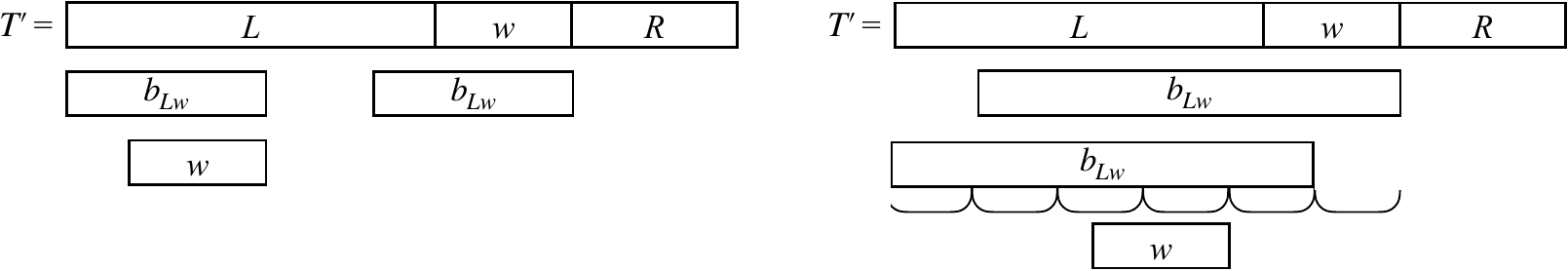}
  }
  \caption{Left: If $b_{Lw}$ is not longer than $L$, then $w$ occurs within $L$ as a suffix of the prefix-occurrence of $b_{Lw}$ in $L$.
    Right: If $b_{Lw}$ is longer than $L$, then $w$ occurs within $L$ because of the periodicity of $b_{Lw}$.
  } \label{fig:w_occursinL}
\end{figure}
Thus, in both cases, $w$ occurs \emph{within} $L$, which is a factor of the original $T$.
From this observation, any single LCE query on $T'$ can be simulated by constant times LCE queries on $T$
because any LCE query on $w = T'[|L|+1.. |L|+|w|]$ can be simulated by a constant number of LCE queries on another occurrence of $w$ within $L$.
Therefore, in the following, we use the fact that any LCE query on $T' = LwR$ can be answered in $O(1)$ time as a black box.

Now, we show some properties of the border of $T'$.
As we mentioned in Proposition~\ref{prop:group},
the sets of borders of $Lw$ can be partitioned into $m \in O(\log n)$ groups w.r.t.~their smallest periods.
Let $G_1, G_2, \ldots, G_m$ be the groups
such that $p_k > p_{k+1}$ for every $1\le k < m$,
where $p_k$ is the period of borders in $G_k$.
Next, let us assume that there exists a border of $T'$ which is longer than $R$.
Let $b^\star$ be the border of $Lw$ such that $b^\star R$ is the border of $T'$.
Further let $k^\star$ be the index of the group to which $b^\star$ belongs.
There are three cases:
(i)   $b^\star$ is periodic and $\period(b^\star) = p_{k^\star} = \period(b^\star R)$,
(ii)  $b^\star$ is periodic and $\period(b^\star) = p_{k^\star} \neq \period(b^\star R)$, or
(iii) $b^\star$ is not periodic.
The first two cases are illustrated in Figure~\ref{fig:longbordercase}.
For the case (i), the following lemma holds.
Here, for a group $G_k$, let $\alpha_k$ be the exponent of the longest prefix of $T'$ with period $p_k$.
\begin{lemma}\label{lem:case1}
  If $b^\star$ is periodic and $p_{k^\star} = \period(b^\star R)$, then
  $|b^\star|\le \alpha_{k^\star}p_{k^\star} - |R|$ holds.
\end{lemma}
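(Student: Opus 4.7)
The plan is to observe that $b^\star R$ is itself a prefix of $T'$ with period $p_{k^\star}$, and then to compare its length directly against $\alpha_{k^\star} p_{k^\star}$, which is by the definition of $\alpha_{k^\star}$ the length of the longest prefix of $T'$ admitting $p_{k^\star}$ as a period.

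Concretely, I would proceed in three short steps. First, recall that $b^\star R$ is the border of $T'$ under consideration; in particular $b^\star R$ occurs as a prefix of $T' = LwR$. Second, by the hypothesis of the lemma, $\period(b^\star R) = p_{k^\star}$, so $b^\star R$ is a prefix of $T'$ having $p_{k^\star}$ as a period. Third, by the definition stated just before the lemma, the longest prefix of $T'$ having $p_{k^\star}$ as a period has length exactly $\alpha_{k^\star} p_{k^\star}$. Combining these three observations gives $|b^\star R| \le \alpha_{k^\star} p_{k^\star}$, and since $|b^\star R| = |b^\star| + |R|$, rearranging yields the desired inequality $|b^\star| \le \alpha_{k^\star} p_{k^\star} - |R|$.

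The only subtle point, and hence the potential obstacle, is to make sure we are using the right notion of ``period'' when invoking the definition of $\alpha_{k^\star}$: the hypothesis $p_{k^\star} = \period(b^\star R)$ says that $p_{k^\star}$ is the \emph{smallest} period of $b^\star R$, but the definition of $\alpha_{k^\star}$ uses prefixes of $T'$ with period $p_{k^\star}$ (not necessarily smallest). Since a smallest period is a period, this direction is immediate and poses no real difficulty; once this is explicitly noted, the proof reduces to a one-line length comparison.
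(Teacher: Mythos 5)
Your proof is correct and follows essentially the same argument as the paper's: you observe that $b^\star R$ is a prefix of $T'$ with period $p_{k^\star}$, so its length is bounded by $\alpha_{k^\star}p_{k^\star}$, and rearrange. The paper phrases this as a contradiction argument (assume $|b^\star| > \alpha_{k^\star}p_{k^\star}-|R|$ and derive $|b^\star R|/p_{k^\star} > \alpha_{k^\star}$), but the mathematical content is identical; your direct phrasing and the explicit note that the smallest period is in particular a period are both fine.
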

\begin{proof}
  Assume the contrary that $|b^\star| > \alpha_{k^\star}p_{k^\star} - |R|$ holds.
  Then $|b^\star R|/p_{k^\star} > \alpha_{k^\star}$ holds.
  This contradicts the maximality of $\alpha_{k^\star}$ since $b^\star R$ occurs as a prefix of $T'$ and $\period(b^\star R) = p_{k^\star}$.
\end{proof}
For the case (ii), the following lemma holds.
Here, for a group $G_k$, let $r_{k} = \lce_{T'}(|T'| - |R| -p_{k}+1, |T'| - |R| + 1)$.
\begin{lemma}\label{lem:case2}
  If $b^\star$ is periodic and $p_{k^\star} \ne \period(b^\star R)$, then
  $|b^\star| = \alpha_{k^\star}p_{k^\star} - r_{k^\star}$ holds.
\end{lemma}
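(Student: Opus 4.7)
The plan is to abbreviate $p = p_{k^\star}$, $\alpha = \alpha_{k^\star}$, and $r = r_{k^\star}$, and to reduce the lemma to the single equation $r = \alpha p - |b^\star|$, which I would then establish by matching inequalities. Two preliminary facts drive everything. First, because $b^\star$ is a prefix of $b^\star R$, Proposition~\ref{prop:period} gives $p = \period(b^\star) \le \period(b^\star R)$, and the case hypothesis $\period(b^\star R) \ne p$ strengthens this to $\period(b^\star R) > p$. Second, $b^\star$ is a period-$p$ prefix of $T'$, so $|b^\star| \le \alpha p$; and if $\alpha p \ge |b^\star R|$ held, then $b^\star R$ would be a prefix of the period-$p$ prefix $T'[1.. \alpha p]$, which via Proposition~\ref{prop:period} would contradict $\period(b^\star R) > p$. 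This puts $\alpha p - |b^\star|$ in the range $\{0, 1, \ldots, |R| - 1\}$, so the indices used below stay safely inside $T'$.

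The heart of the proof is the positional identity
\[
  T'[|Lw| - p + 1.. |Lw| + s] = T'[|b^\star| - p + 1.. |b^\star| + s] \qquad (0 \le s \le |R|),
\]
which I would obtain by gluing together two equalities: the last $p$ characters of the two occurrences of $b^\star$ agree (one at position $1$, the other at position $|Lw| - |b^\star| + 1$, since $b^\star$ is a border of $Lw$ and $|b^\star| \ge 2p$), and the first $s$ characters of the two occurrences of $R$ agree (one at position $|b^\star| + 1$, the other at position $|Lw| + 1$, since $b^\star R$ is a border of $T'$). Setting $s = \alpha p - |b^\star|$ in this identity rewrites the relevant window as a substring of the period-$p$ prefix $T'[1.. \alpha p]$, so it inherits period $p$; this says precisely that the suffixes of $T'$ starting at positions $|Lw| - p + 1$ and $|Lw| + 1$ agree in their first $\alpha p - |b^\star|$ characters, giving $r \ge \alpha p - |b^\star|$. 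Setting $s = \alpha p - |b^\star| + 1$ instead translates the next pair of characters to $T'[(\alpha - 1)p + 1]$ and $T'[\alpha p + 1]$, both of which are defined since $\alpha p + 1 \le |b^\star R| \le |T'|$, and the maximality defining $\alpha$ forces them to differ, giving the matching upper bound $r \le \alpha p - |b^\star|$.

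The hard part is really the index bookkeeping: each character in the relevant range of $T'$ admits two natural representations thanks to the dual (prefix and suffix) roles of $b^\star$ in $Lw$ and of $b^\star R$ in $T'$, and the argument works only because the representation on the right-hand side of the identity above is exactly the one sitting inside the period-$p$ prefix, which is where the maximality of $\alpha$ can be invoked.
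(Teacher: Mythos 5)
Your proof is correct and follows essentially the same route as the paper: both match the prefix and suffix occurrences of $b^\star R$ via the border property and identify $\alpha_{k^\star} p_{k^\star}$ with $|b^\star| + r_{k^\star}$ by tracking where the period $p_{k^\star}$ first breaks. Your write-up is somewhat more careful than the paper's terse version in that it makes explicit the bound $\alpha_{k^\star} p_{k^\star} < |b^\star R|$ (derived from $\period(b^\star R) > p_{k^\star}$ via Proposition~\ref{prop:period}), which the paper's proof uses implicitly when equating the longest period-$p_{k^\star}$ prefix of $T'[1..|b^\star R|]$ with that of $T'$ itself.
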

\begin{proof}
  Since the period of $b^\star$ is $p_{k^\star}$, the longest prefix of $T'[|T'|-|b^\star R|+1.. |T'|]$ with period $p_{k^\star}$ is of length $|b^\star| + r_{k^\star}$.
  Thus, by the definition of $\alpha_{k^\star}$, $\alpha_{k^\star}p_{k^\star} = |b^\star| + r_{k^\star}$ holds since $T'[1.. |b^\star R|] = T'[|T'|-|b^\star R|+1.. |T'|]$.
  Therefore, $|b^\star| = \alpha_{k^\star}p_{k^\star} - r_{k^\star}$.
\end{proof}
\begin{figure}[tb]
  \center{
    \includegraphics[width=\linewidth]{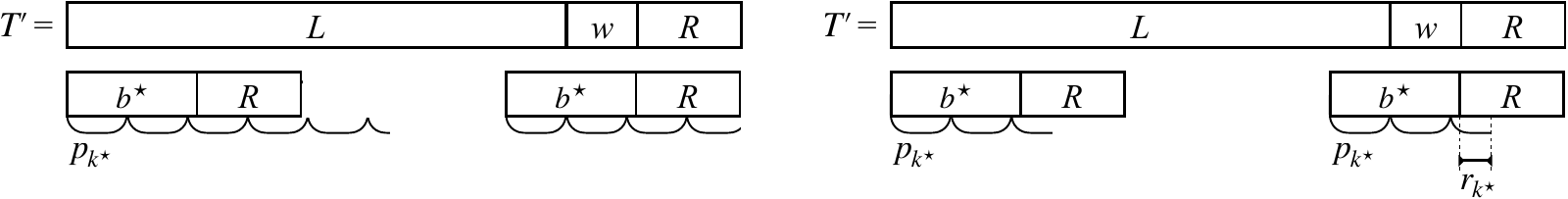}
  }
  \caption{Left:  Illustration for the case (i)  $b^\star$ is periodic and $\period(b^\star) = p_{k^\star} = \period(b^\star R)$.
    The period $p_{k^\star}$ repeats five times and a little more in $T'$.
    Then $|b^\star|$ is at most the length of the repetition minus $|R|$~(Lemma~\ref{lem:case1}).
    Right: Illustration for the case (ii) $b^\star$ is periodic and $\period(b^\star) = p_{k^\star} \neq \period(b^\star R)$.
    Since the maximal repetition of period $p_{k^\star}$ ends within $R$,
    the length $|b|$ is equal to the length of the maximal repetition minus $r_{k^\star}$
    where $r_{k^\star}$ is the length of the suffix of the repetition that enters $R$~(Lemma~\ref{lem:case2}).
  } \label{fig:longbordercase}
\end{figure}
Clearly, if $p_{k^\star} = \period(b^\star R)$, then $r_{k^\star} = |R|$ holds.
Hence, by combining the two above lemmas, we obtain the next corollary:
\begin{corollary}\label{cor:bord}
  If $b^\star$ is periodic, then
  $b^\star$ is the longest border of $Lw$ whose length is at most $\alpha_{k^\star}p_{k^\star} - r_{k^\star}$.
\end{corollary}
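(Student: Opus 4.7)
The plan is to prove the corollary by combining Lemma~\ref{lem:case1} and Lemma~\ref{lem:case2}, together with the remark just before the corollary that $r_{k^\star}=|R|$ in case (i). First I would establish $|b^\star|\le \alpha_{k^\star}p_{k^\star}-r_{k^\star}$ by splitting on cases. In case (ii), Lemma~\ref{lem:case2} gives the stronger equality $|b^\star|=\alpha_{k^\star}p_{k^\star}-r_{k^\star}$ directly. In case (i), I would first justify $r_{k^\star}=|R|$ by noting that $b^\star R$ has period $p_{k^\star}$ and $|b^\star|\ge 2p_{k^\star}$, so shifting the $R$-suffix of $T'$ backwards by $p_{k^\star}$ still lands inside $b^\star R$ and agrees character-by-character over the full length $|R|$; the match cannot exceed $|R|$ since the later compared position is only $|R|$ characters from the end of $T'$. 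Combining this with Lemma~\ref{lem:case1} gives $|b^\star|\le \alpha_{k^\star}p_{k^\star}-|R|=\alpha_{k^\star}p_{k^\star}-r_{k^\star}$.

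Next I would argue that no border $b'$ of $Lw$ with $|b'|>|b^\star|$ fits under the bound. In case (ii) this is immediate from the equality just shown. In case (i), I would assume for contradiction that such a $b'$ satisfies $|b^\star|<|b'|\le \alpha_{k^\star}p_{k^\star}-|R|$. Then $|b'R|\le \alpha_{k^\star}p_{k^\star}$, so $b'R$ sits inside the longest period-$p_{k^\star}$ prefix $P$ of $T'$. Since $b^\star R$ is also a prefix of $T'$ inside $P$, we have $T'[|b^\star|+1..\,|b^\star|+|R|]=R$; using the period-$p_{k^\star}$ structure of $P$ to align $|b'|+1$ with $|b^\star|+1$ modulo $p_{k^\star}$, I would derive $T'[|b'|+1..\,|b'|+|R|]=R$ as well. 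Combined with the fact that $b'$ is a suffix of $Lw$, this would make $b'R$ a border of $T'$ strictly longer than $b^\star R$, contradicting the definition of $b^\star$.

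The main obstacle is the alignment step in case (i). The alignment is straightforward when $b'$ belongs to $G_{k^\star}$, since Proposition~\ref{prop:group}(1) then forces $|b'|-|b^\star|$ to be a multiple of $p_{k^\star}$. The trickier subcase is when $b'$ lies in a different group; here I would invoke a Fine--Wilf-style argument on $b'$ viewed as a prefix of the strongly periodic $P$, together with the minimality of $p_{k^\star}$ as the smallest period of $b^\star$, to reduce to the same-group situation. Once that alignment is secured, the rest of the proof is just bookkeeping through the two lemmas.
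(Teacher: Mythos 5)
Your proposal is correct and follows the approach the paper intends; the paper itself only says ``by combining the two above lemmas'' and leaves the maximality claim implicit, so you are actually filling in a step that the paper glosses over. Your case split is exactly right: case~(ii) is immediate from the equality in Lemma~\ref{lem:case2}, and in case~(i) you correctly observe that $r_{k^\star}=|R|$ (your argument via periodicity of $b^\star R$ over the two compared windows is sound, and the upper bound $r_{k^\star}\le|R|$ is trivial) and then that one must additionally exclude a longer border $b'$ of $Lw$ fitting under the bound.

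One simplification is worth noting: the Fine--Wilf appeal for the different-group subcase is unnecessary. If $|b^\star|<|b'|\le\alpha_{k^\star}p_{k^\star}-|R|$, then $b'$ is a proper prefix of the maximal $p_{k^\star}$-periodic prefix $P$ of $T'$, so $p_{k^\star}$ is a period of $b'$ and hence $\period(b')\le p_{k^\star}$. On the other hand $b^\star$ is a prefix of $b'$ (both are prefixes of $Lw$ and $|b^\star|<|b'|$), so Proposition~\ref{prop:period} gives $\period(b')\ge\period(b^\star)=p_{k^\star}$. Therefore $\period(b')=p_{k^\star}$, i.e., $b'\in G_{k^\star}$ automatically, and your same-group alignment applies directly: $|b'|-|b^\star|$ is a multiple of $p_{k^\star}$, $b'R$ lies inside $P$, so $T'[|b'|+1..|b'|+|R|]=T'[|b^\star|+1..|b^\star|+|R|]=R$, making $b'R$ a border of $T'$ longer than $b^\star R$ --- contradiction. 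This replaces the vaguer Fine--Wilf invocation with the proposition already in hand.
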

Based on this corollary, we design an algorithm to answer the {\LBAE} queries.

\paragraph*{Algorithm.}
The idea of our algorithm is as follows:
given a query, we first initialize \emph{candidates-set} $\mathcal{C} = \emptyset$, which will be a set of candidates for the length of the border of $T'$.
Next, for each group of borders of $Lw$, we calculate a constant number of candidates from the group and add their lengths to the candidates-set $\mathcal{C}$ (the details are described below).
In the end, we choose the maximum from $\mathcal{C}$ and output it.

Now we consider the $k$-th group $G_k$ for a fixed $k$ and how to calculate candidates.
If $|G_k| \le 2$, we just try to extend each border in $G_k$ to the right by using LCE queries on $T'$,
and if the extension reaches the right-end of $T'$,
we add its length to $\mathcal{C}$.
Note that we do not care about the periodicity of borders here.
Otherwise,
we compute $\alpha_k$ and $r_k$ by using LCE queries on $T'$.
Let $\tilde{b}_k$ be the longest element in $G_k$ whose length is at most $\alpha_k p_k-r_k$, if such a border exists.
If $\tilde{b}_k$ is defined, we check whether $\tilde{b}_kR$ is a border of $T'$ or not, again by using an LCE query on $T'$.
If $\tilde{b}_kR$ is a border of $T'$, we add its length to $\mathcal{C}$.
Also, we similarly check whether $b_k^{\min}R$ is a border of $T'$, and if so, add its length to $\mathcal{C}$,
where $b_k^{\min}$ is the shortest element in $G_k$, which may be non-periodic.

\paragraph*{Correctness.}

If a group $G_k$ contains at least three elements,
the borders in $G_k$ except for the shortest one are periodic (Proposition~\ref{prop:group}).
Namely, any non-periodic border of $Lw$ is either
an element of a group whose size is at most two, or
the shortest element of a group whose size is at least three.
Both cases are completely taken care of by our algorithm.
For periodic borders,
it is sufficient to check the longest border $\tilde{b}_k$ of $Lw$ whose length is at most $\alpha_{k}p_{k} - r_{k}$
for each group $G_k$ by Corollary~\ref{cor:bord}.
Therefore, the length of the border of $T'$ must belong to the candidates-set $\mathcal{C}$ obtained at the end of our algorithm.

\paragraph*{Running Time.}
Given a query $\phi(i, j, w)$, we can obtain the border array $\B_{Lw}$ and the border-group array $\BG_{Lw}$ in $O(\beta(n, |w|))$ time
from $\B_T[1..|L|]$ and $\BG_{T}[1.. |L|]$~(Lemmas~\ref{lem:border} and \ref{lem:bordergroup}).
Thus, by using those arrays,
while we scan the groups $G_1, \ldots, G_m$,
we can
determine whether the current group $G_k$ has at least three elements or not, and
compute the first term and the common difference of the arithmetic progression representing the current group $G_k$
both in constant time.
All the other operations consist of LCE queries on $T'$ and basic arithmetic operations, which can be done in constant time.
Finally, we choose the maximum from $\mathcal{C}$, which can be done $O(|\mathcal{C}|)$ time.
Since we add at most two elements to $\mathcal{C}$ when we process each group,
the size of $\mathcal{C}$ is in $O(m)$.
Thus the total running time is in $O(\beta(n, |w|) + |w| + \log n + m) \subseteq O(\beta(n, |w|) + |w| + \log n)$ since $m \in O(\log n)$.

To summarize this section, we obtain the following theorem.

\begin{theorem}\label{thm:border}
  The longest border after-edit query can be answered in $O(\beta(n, \ell) + \ell + \log n)$ time
  after $O(n)$-time preprocessing,
  where $\ell$ is the length of the string to be inserted or substituted specified in the query.
\end{theorem}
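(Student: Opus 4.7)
The plan is to assemble the theorem from the ingredients already developed in this section, by bookkeeping the preprocessing carefully and then bounding the per-query cost in each sub-case. For preprocessing I would, in $O(n)$ time, build: the border array $\B_T$ (\cref{lem:border}); the border-group array $\BG_T$ (\cref{lem:bordergroup}); the LCE data structure on $T$ (\cref{lem:lce}); the IPM data structure on $T$ (\cref{lem:IPM}); and one auxiliary table $M[1..n]$ where $M[k]$ is the length of the longest border of $T$ that is a suffix of $R$ and a prefix of $L$ when $|R|=k$; by scanning $\B_T$ once, $M$ can be filled in $O(n)$ time. I would also do the symmetric preprocessing on the reversal of $T$ so that the assumption $|L|\ge |R|$ can be imposed at query time without loss of generality.

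For a query $\phi(i,j,w)$, I would first dispose of the regime $|w|>|L|/2$: here $|w|>|T'|/5$, so recomputing $\B_{T'}$ from scratch via \cref{lem:border} solves the problem in $O(|T'|)=O(|w|)$ time. In the remaining regime I would split the work as in the section: \textbf{Step~2}, which asks for the longest border of $T'$ of length at most $|R|$, is answered in $O(1)$ time from the auxiliary table $M$ together with the suffix structure on $R$, because such a border is both a suffix of $R$ and a prefix of $L$; \textbf{Step~1}, which looks for a border of $T'$ longer than $R$, is the main task. For Step~1 I would first extend $\B_T[1..|L|]$ and $\BG_T[1..|L|]$ into $\B_{Lw}$ and $\BG_{Lw}$ in $O(|w|)$ amortized time using \cref{lem:border,lem:bordergroup}, so that the groups $G_1,\ldots,G_m$ of \cref{prop:group} are accessible in $O(1)$ time each, and I would also compute the prefix table $\Z_W$ of $W=w\cdot R[|R|-|w|+1..|R|]$ in $O(|w|)$ time to support arbitrary LCE queries whose arms land inside $w$ or inside the length-$|w|$ suffix of $R$.

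The heart of the proof is Step~1. In the short-border case $|b_{Lw}|\le |w|$, I would enumerate the at most $|w|$ borders of $Lw$ and test each with one LCE query, charged either to $T$ or to $\Z_W$ in $O(1)$ time, for $O(|w|)$ total. In the long-border case $|b_{Lw}|>|w|$, I would exploit the paper's observation that $w$ occurs within $L$ (either as a suffix of a prefix-occurrence of $b_{Lw}$, or by periodicity of $b_{Lw}$) so that any LCE query on $T'$ is answered by $O(1)$ LCE queries on $T$. I would then iterate over $G_1,\ldots,G_m$; for $|G_k|\le 2$ both borders are tested directly, and for $|G_k|\ge 3$ I would compute $p_k$, $\alpha_k$, and $r_k$ from $\B_{Lw}$ and $\BG_{Lw}$ with $O(1)$ LCE queries, locate the longest element $\tilde b_k$ of the arithmetic progression of lengths not exceeding $\alpha_k p_k - r_k$ by integer arithmetic, and verify $\tilde b_k R$ and $b_k^{\min} R$ with one LCE query each, adding matching lengths to $\mathcal{C}$.

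Correctness then follows by case analysis: \cref{prop:group}(2) ensures every non-periodic border of $Lw$ is either in a group of size at most two or the shortest element of its group, so the enumeration catches it; \cref{cor:bord} ensures that for periodic $b^\star$ the length $|b^\star|$ equals $|\tilde b_{k^\star}|$ for its group. Outputting $\max\bigl(\mathcal{C}\cup\{\text{Step 2 answer}\}\bigr)$ therefore returns the longest border of $T'$. For the running time, \cref{prop:group}(3) gives $m\in O(\log n)$, so Step~1 in the long case costs $O(\log n)$ and $|\mathcal{C}|\in O(\log n)$, while the short case and the array extensions cost $O(|w|)$; Step~2 is $O(1)$. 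The total is $O(|w|+\log n)=O(\ell+\log n)$, as claimed. The main obstacle I anticipate is keeping the LCE oracle on $T'$ truly $O(1)$ in the long-border case across all positions touched by the algorithm, which is why the occurrence of $w$ inside $L$ and the prefix table $\Z_W$ must be invoked carefully for every call.
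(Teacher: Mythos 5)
Your proposal follows essentially the same approach as the paper: the same $|L|\ge|R|$ and $|w|\le|L|/2$ reductions, the same Step~1/Step~2 split with $M[k]$ matching the paper's precomputed ``longest border of $T$ of length at most $k$'' table, the same short/long border dichotomy on $b_{Lw}$, the prefix table $\Z_W$ for LCEs touching $w$, the observation that $w$ occurs in $L$ so LCEs on $T'$ reduce to LCEs on $T$, and the group-by-period iteration via \cref{prop:group} and \cref{cor:bord}. The bookkeeping and the $O(|w|+\log n)$ bound are all as in the paper's proof.
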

\section{Shortest Cover After Edit} \label{sec:cover}

This section proposes an algorithm to solve the {\SCAE} problem.
Firstly, we give additional notations and tools.
For a string $S$ and an integer $k$ with $1 \le k \le |S|$,
$\range(S, k)$ denotes the largest integer $r$ such that
$S[1.. k]$ can cover $S[1.. r]$.
Next, we give definitions of two arrays $\C(T)$ and $\R(T)$ introduced in~\cite{Breslauer92}.
The former $\C(T)$ is called the \emph{cover array} and stores the length of the cover of each prefix of $T$,
i.e., $\C(T)[k] = |\cover(T[1.. k])|$ for each $k$ with $1 \le k \le n$.
For convenience, let $\C(T)[0] = 0$.
The latter $\R(T)$ is called the \emph{range array} that stores the values of range function only for \emph{superprimitive prefixes} of $T$, i.e.,
for each $k$,
$\R(T)[k] = \range(T, k)$ if $\cover(T[1.. k]) = T[1.. k]$, and
otherwise $\R(T)[k] = 0$, meaning undefined.
Cover array and range array can be computed in $O(n)$ time given $T$ in an online manner~\cite{Breslauer92}.
In describing our algorithm, we use the next lemma:
\begin{lemma} \label{lem:range_array}
  Assume that we already have data structure $\mathcal{D}_T$
  consisting of
  the IPM data structure on $T$ of Lemma~\ref{lem:IPM},
  border array $\B(T)$,
  cover array $\C(T)$,
  range array $\R(T)$, and
  an array $\Rstar$ of size $n$ initialized with $\boldsymbol{0}$.
  Given a query $\phi(i, j, w)$,
  we can enhance $\mathcal{D}_T$ in $O(\beta(n, |w|) + |w|)$ time
  so that we can obtain
  $\cover((Lw)[1..k])$ for any $k$ with $1 \le k \le |Lw|$ 
  and
  $\range(Lw, k')$ for any $k'$ such that $1 \le k' \le |L|$ and $\cover(L[1.. k']) = L[1.. k']$
  in $O(1)$ time.
\end{lemma}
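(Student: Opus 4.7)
The plan is to perform an online extension of Breslauer's simultaneous cover-and-range array construction, carrying it from the state at $L$ to the state at $Lw$, and to record all deltas in the overlay $\Rstar$ and a scratch extension of $\C$. Since $(Lw)[1..k]=T[1..k]$ for every $k\le|L|$, the entries $\B_T[1..|L|]$, $\C(T)[1..|L|]$, and $\R(T)[1..|L|]$ already in $\mathcal{D}_T$ supply both the answers for $k\le|L|$ and the initial state of Breslauer's algorithm at length $|L|$; the data structure $\mathcal{D}_T$ itself is never destructively modified.

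First, I would extend $\B_T[1..|L|]$ to $\B_{Lw}$ using the online algorithm of Lemma~\ref{lem:border}, in $O(|w|)$ time. Next, feeding $w[1],\ldots,w[|w|]$ into Breslauer's online cover-and-range algorithm starting from the inherited state, I would compute $\C[|L|+1..|L|+|w|]$ (stored in the scratch extension) and, for each superprimitive prefix $L[1..k']$ that is still active at length $|L|$---equivalently, that covers $L$---write the updated value of $\range(Lw,k')$ into $\Rstar[k']$. Superprimitive prefixes whose coverage chain breaks already within $L$ need no overlay write: the same breakage occurs in $Lw$ because those positions agree, so $\range(Lw,k')=\range(L,k')$, which is recoverable from $\R(T)$ together with the IPM structure of $\mathcal{D}_T$ when the $T$-chain used an occurrence past position $|L|-k'+1$. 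Cover queries then return $\C(T)[k]$ when $k\le|L|$ and the scratch value otherwise; range queries return $\Rstar[k']$ when nonzero and the $\mathcal{D}_T$-derived value for $\range(L,k')$ otherwise. A small log of positions touched in $\Rstar$ and the scratch extension allows clearing both in $O(|w|)$ time after answering, preserving the after-edit model.

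The main obstacle is the $O(|w|)$ running-time bound. Breslauer's amortized $O(1)$-per-character analysis gives $O(|w|)$ for the $|w|$ new characters using a potential function proportional to the number of active superprimitive-prefix candidates; the initial potential at length $|L|$ corresponds to the chain of covers of $L$ (by Lemma~\ref{lem:cover_properties}), and the credits associated with these candidates are inherited from the $O(n)$-time construction of $\C(T)$ and $\R(T)$, so finalizing them during the $|w|$-character extension incurs no amortized cost beyond $O(|w|)$. Correctness follows because the coverage of $L[1..k']$ in $Lw$ depends only on occurrences of $L[1..k']$ within $L$ and within the new characters of $w$, which is exactly what Breslauer's online extension examines at each step.
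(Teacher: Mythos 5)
Your overall plan---rerun Breslauer's online cover-and-range construction starting from index $|L|+1$, treat $\C(T)[1..|L|]$ as the unchanged left part of the cover array, keep the new cover values in a scratch extension, and funnel all range-array updates through the $\Rstar$ overlay, reading from $\R(T)$ when $\Rstar$ holds a zero---is exactly the strategy of the paper's proof, and your observation that $\R(L)$ is \emph{not} simply $\R(T)[1..|L|]$ and that IPM is the tool for recovering the missing values is the right key insight.

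Two things are off or underdeveloped, though. First, the dichotomy you draw is misaligned with what actually happens. You claim the superprimitive prefixes that need an $\Rstar$ write are exactly those that ``cover $L$'' (i.e.\ $\range(L,k')=|L|$), and that a prefix whose chain ``breaks already within $L$'' satisfies $\range(Lw,k')=\range(L,k')$ and needs no write. That second claim is false: take $L[1..3]=\mathtt{aba}$, $L=\mathtt{abaababab}$, $w=\mathtt{a}$. Then $\range(L,3)=8<|L|=9$ (the chain does not reach the end of $L$), yet $\range(Lw,3)=10$. What actually determines whether a write can occur at step $\idx$ is Breslauer's condition $R[\clen]\ge\idx-\clen$, which only requires the chain to reach within distance $\clen$ of the current index, not to cover all of $L$. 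You don't need a prior characterization at all---the algorithm tests the condition itself---so this confusion doesn't sink the method, but it should be removed rather than asserted.

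Second, and more substantively, you assert that $\range(L,k')$ is ``recoverable from $\R(T)$ together with the IPM structure'' when $\R(T)[k']>|L|$, but you never justify why a single $O(1)$-time IPM query suffices. The missing step is the argument that when $\R(T)[k']>|L|$, the rightmost occurrence of $T[1..k']$ inside $T[1..|L|]$ must cover position $|L|-k'+1$: since there is an occurrence of $T[1..k']$ ending strictly past $|L|$, and consecutive occurrences along the coverage chain start at most $k'$ apart, the last one ending at or before $|L|$ starts at a position $\ge|L|-2k'+2$. Hence the IPM query can be restricted to $u=T[1..k']$ and the window $v=T[|L|-2k'+2..|L|]$ of length $2k'-1\le 2|u|$, meeting the $|v|\le 2|u|$ constraint of Lemma~\ref{lem:IPM}, and then $\R(L)[k']=s+k'-1$ where $s$ is the rightmost occurrence returned. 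Without bounding the window to length $O(k')$, the constant-time claim for each IPM-based recovery, and hence the overall $O(|w|)$ bound, is not established. (A minor point: you invoke a potential-function/amortized argument for Breslauer's loop, but given the border array, each iteration of that loop is worst-case $O(1)$---only the border-array extension of Lemma~\ref{lem:border} is amortized.)
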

\begin{proof}
To prove the lemma, we first review Breslauer's algorithm~\cite{Breslauer92}
that computes $\C(T)$ and $\R(T)$ for a given string $T$ in an online manner~(Algorithm~\ref{alg:coverarray}).
\begin{algorithm}[tbh]
  \caption{Algorithm to compute $\C(T)$ proposed in~\cite{Breslauer92}}\label{alg:coverarray}
  \begin{algorithmic}[1]
    \Require{The border array $\B(T)$ of string $T$, and two arrays $C[0..n] = R[0..n] = \boldsymbol{0}$.}
    \Ensure{$C = \C(T)$ and $R = \R(T)$}
    \State $\idx \leftarrow 1$
    \While{$\idx \leq n$}
    \State $\clen \leftarrow C[\B(T)[\idx]]$ \Comment{$\clen < \idx$ always holds.}
    \If{$\clen > 0$ and $R[\clen] \geq \idx-\clen$}
    \State $C[\idx] \leftarrow \clen$
    \State $R[\clen] \leftarrow \idx$ \Comment{When $T[1.. \idx]$ is \emph{not} superprimitive, $R[\clen]$ is updated to $\idx$.}
    \Else
    \State $C[\idx] \leftarrow \idx$
    \State $R[\idx] \leftarrow \idx$ \Comment{When $T[1..\idx]$ is superprimitive, $R[\idx]$ is newly defined.}
    \EndIf
    \State $\idx \leftarrow \idx+1$
    \EndWhile
  \end{algorithmic}
\end{algorithm}
%

By Lemma~\ref{lem:border}, we can compute $\B(Lw)$ in $O(\beta(n, |w|))$ time if $\B(L)$ and $w$ are given.
Since Algorithm~\ref{alg:coverarray} runs in an online manner,
if we have $\C(L)$ and $\R(L)$ in addition to $\B(Lw)$, then it is easy to obtain $\C(Lw)$ and $\R(Lw)$
in $O(|w|)$ time
by running Algorithm~\ref{alg:coverarray} starting from the $(|L|+1)$-th iteration.
However, we only have $\C(T)$ and $\R(T)$, not $\C(L)$ and $\R(L)$.

The proof idea is to simulate $\C(L\cdot w[1.. t-1])$ and $\R(L\cdot w[1.. t-1])$
while iterating the while-loop of Algorithm~\ref{alg:coverarray} from $t = 1~(\idx = |L|+1)$ to $t = |w|~(\idx = |L| + |w|)$.
Note that in each $t$-th iteration of the algorithm,
only the first at most $\idx - 1 = |L| + t-1$ values of $C$ and $R$ may be referred since $\clen < \idx$ holds at line~$3$ of Algorithm~\ref{alg:coverarray}.
In the following, 
we show how to simulate the arrays
$\C(L\cdot w[1.. t-1])$ and $\R(L\cdot w[1.. t-1])$ for each $t$-th iteration
in an inductive way by looking at Algorithm~\ref{alg:coverarray}.

Firstly, to show the base-case $t = 1$, we consider the relations between the arrays of $T$ and those of $L$.
By the definition of the cover array,
$\C(L) = \C(T)[1.. |L|]$ holds since $L$ is a prefix of $T$.
Hence, we do not need any data structure other than $\C(T)$ to simulate $\C(L)$.
On the other hand, $\R(L)$ is not necessarily identical to $\R(T)[1.. |L|]$.
Specifically, since $L$ is a prefix of $T$,
$\R(L)[k] = \R(T)[k]$ holds if and only if $\R(T)[k] \le |L|$ holds
for every $k$ with $1 \le k \le |L|$.
Let $k$ be an integer such that $1 \le k \le |L|$ and $T[1.. k]$ is superprimitive.
If $\R(T)[k] > |L|$, we compute the value of $\R(L)[k]$ on demand.
We use the next claim.
\begin{claim} \label{claim:ipm}
  When $\R(T)[k] > |L|$,
  the rightmost occurrence of $T[1.. k]$ within $T[1.. |L|]$ must cover the position $|L|-k+1$,
  and thus, it occurs within $T[|L|-2k+2.. |L|]$ of length $2k-1$.
\end{claim}
Thus, if $\R(T)[k] > |L|$, we can obtain the value of $\R(L)[k]$
by answering the internal pattern matching query for two factors $u = T[1.. k]$ and $v = T[|L|-2k+2.. |L|]$ of $T$.
More precisely, $\R(L)[k] = s + k - 1$
where $s$ is the rightmost occurrence of $u$ in $v$,
which can be obtained by the internal pattern matching query on $T$ in constant time~(Lemma~\ref{lem:IPM}).
Hence, we can simulate $\C(L)$ and $\R(L)$ before the first iteration (i.e., $\idx = |L|+1$) of our algorithm.
By using this fact,
we can compute the values of $\C(L\cdot w[1])[\idx]$ and
either $\R(L\cdot w[1])[\clen]$ or $\R(L\cdot w[1])[\idx]$ correctly
(the correctness is due to~\cite{Breslauer92}).
Also, for an invariant of subsequent iterations,
we update $\Rstar[k] \leftarrow \R(L\cdot w[1])[k]$
where $k$ is either $\clen$ or $\idx$, which depends on the branch of the \textbf{if} statement.
Then, at the end of the first iteration, we can simulate $\C(L\cdot w[1])$
since we can already simulate $\C(L\cdot w[1])[1..|L|] = \C(L)$
and we have computed the last element of $\C(L\cdot w[1])$.
Next, we consider the range array.
Let $\mathit{last} = |L|+1$.
If $\Rstar[\mathit{last}] \ne 0$, then $\R(L\cdot w[1])[\mathit{last}] = \mathit{last}$ holds
because line~$9$ was executed.
Otherwise, $\R(L\cdot w[1])[\mathit{last}] = 0$ since $L\cdot w[1]$ is not superprimitive.
Next, let $k$ be an integer with $1 \le k < \mathit{last}$.
According to Algorithm~\ref{alg:coverarray},
the only candidate for the different element
between $\R(L)$ and $\R(L\cdot w[1])[1.. |L|]$ is the $\clen$-th element.
Thus, if $\Rstar[k] \ne 0$, then $k = \clen$ and $\R(L\cdot w[1])[\clen] = \Rstar[\clen]$ hold
since the $\clen$-th element of $\R(L)$ has been updated and $\Rstar[\clen]$ stores the updated value.
Otherwise, $\R(L\cdot w[1])[k]$ has never been updated, i.e., $\R(L\cdot w[1])[k] = \R(L)[k]$, which can be simulated as mentioned above.
Therefore, we can simulate $\R(L\cdot w[1])$ at the end of the first iteration.

Next, we consider the $(\iota+1)$-th iteration for some $\iota$ with $1 \le \iota \le |w|-1$.
Just after the $\iota$-th iteration (equivalently, just before the $(\iota+1)$-th iteration),
we assume that
(1) we can simulate $\C(L\cdot w[1.. \iota])$,
(2) we can simulate $\R(L\cdot w[1.. \iota])$, and
(3) for every position $k$ such that we have updated $R[k]$ in a previous iteration,
$\Rstar[k]$ stores the correct value of $\R(L\cdot w[1.. \iota])[k]$, which is a non-zero value.
Now, we show the three above invariants hold just after the $(\iota+1)$-th iteration.
Note that $\idx = |L| + \iota + 1$ in this step.
By the assumptions, we can compute $\C(L\cdot w[1.. \iota])[\B(L\cdot w[1.. \iota+1])[\idx]$ at line~$3$
and $\R(L\cdot w[1.. \iota])[\clen]$ at line~$4$ by simulating the arrays.
Also, the values of $\C(L\cdot w[1.. \iota+1])[\idx]$
and either $\R(L\cdot w[1.. \iota+1])[\clen]$ or $\R(L\cdot w[1.. \iota+1])[\idx]$ are computed correctly.
In addition to the original procedures, we update $\Rstar[k] \leftarrow \R(L\cdot w[1.. \iota+1])[k]$ appropriately, where $k$ is either $\clen$ or $\idx$,
and then, the third invariant holds at the end of the $(\iota+1)$-th iteration.
Further, similar to the first iteration,
we can simulate $\C(L\cdot w[1.. \iota+1])$, i.e., the first invariant holds.
Lastly, we show that we can simulate $\R(L\cdot w[1.. \iota+1])$ (holding the second invariant).
The proof idea is the same as the base-case.
Let $\mathit{last_{\iota}} = |L|+\iota+1$.
If $\Rstar[\mathit{last_{\iota}}] \ne 0$, then $\R(L\cdot w[1.. \iota+1])[\mathit{last_{\iota}}] = \mathit{last_{\iota}}$ holds
because $\Rstar[\mathit{last_{\iota}}]$ cannot be accessed in any previous iteration, and thus line~$9$ was executed in this iteration.
Otherwise, $\R(L\cdot w[1.. \iota+1])[\mathit{last_{\iota}}] = 0$ since $L\cdot w[1.. \iota+1]$ is not superprimitive.
Next, let $k_{\iota}$ be an integer with $1 \le k_{\iota} < \mathit{last_{\iota}}$.
According to Algorithm~\ref{alg:coverarray},
the only candidate for the different element
between $\R(L\cdot w[1.. \iota])$ and $\R(L\cdot w[1.. \iota+1])[1.. |L|+\iota]$ is the $\clen$-th element.
Thus, same as in the base-case,
$\R(L\cdot w[1.. \iota+1])[k_{\iota}] = \Rstar[k_{\iota}]$ holds
if $\Rstar[k_{\iota}] \ne 0$, and
$\R(L\cdot w[1.. \iota+1])[k_{\iota}] = \R(L\cdot w[1.. \iota])[k_{\iota}]$ holds otherwise.
Therefore, we can simulate $\R(L\cdot w[1.. \iota+1])$ at the end of the $(\iota+1)$-th iteration.

To summarize, running the above algorithm yields a data structure that can simulate $\C(Lw)$ and $\R(Lw)$.
This concludes the proof of Lemma~\ref{lem:range_array}.
\end{proof}
Note that we can prepare the input $\mathcal{D}_T$ of Lemma~\ref{lem:range_array} in $O(n)$ time for a given $T$.
A complete pseudocode is shown in Algorithm~\ref{alg:ours}.
\begin{algorithm}[ht]
  \caption{Algorithm to compute data structures which can simulate $\C(Lw)$ and $\R(Lw)$}\label{alg:ours}
  \begin{algorithmic}[1]
    \Require{$\B(T)$, $\C(T)$, $\R(T)$, $\Rstar[1.. n] = \boldsymbol{0}$, and $\phi(i, j, w)$.}
    \Ensure{(i) $C_w[1.. |w|] = \C(Lw)[|L|+1.. |Lw|]$ and (ii) $\Rstar[k] = \R(Lw)[k]$ if $1 \le k \le |Lw|$ and $\R(T)[k] \ne \R(Lw)[k] > |L|$, and $\Rstar[k] = 0$ otherwise.}
    \State $\idx \leftarrow |L|+1$ \Comment{Starting from $(|L|+1)$-th position.}
    \While{$\idx \leq |Lw|$}
    \State $\clen \leftarrow \C(T)[\B(T)[\idx]]$
    \If {$\clen > 0$} \Comment{$T[1.. \clen]$ is superprimitive.}
    \If {$\Rstar[\clen] \ne 0$}
    \State $r \leftarrow \Rstar[\clen]$
    \ElsIf {$\R(T)[\clen] \le |L|$}
    \State $r \leftarrow \R(T)[\clen]$
    \Else \Comment{$\Rstar[\clen] = 0$ and $\R(T)[\clen] > |L|$}
    \State $r \leftarrow \mathsf{rightend}(T[1.. \clen], T[|L|-2\clen+2.. |L|])$
    \EndIf
    \If{$r \geq \idx-\clen$} \Comment{$r = \R(L\cdot w[1.. \idx-|L|])[\clen]$}
    \State $C_w[\idx-|L|] \leftarrow \clen$
    \State $\Rstar[\clen] \leftarrow \idx$ \Comment{$(Lw)[1.. \idx]$ is \emph{not} superprimitive.}
    \State $\idx \leftarrow \idx+1$
    \State {\bf continue} \Comment{Go to the next iteration.}
    \EndIf
    \EndIf
    \State $C_w[\idx-|L|] \leftarrow \idx$
    \State $\Rstar[\idx] \leftarrow \idx$ \Comment{$(Lw)[1..\idx]$ is superprimitive.}
    \State $\idx \leftarrow \idx+1$
    \EndWhile
  \end{algorithmic}
\end{algorithm}

\paragraph*{Overview of Our Algorithm for {\SCAE} Queries}

To compute the cover of $T'$,
we first run the {\LBAE} algorithm of Section~\ref{sec:border}.
Then, there are two cases:
(i)  The \emph{non-periodic case}, where the length of $\border(T')$ is smaller than $|T'|/2$, or
(ii) the \emph{periodic case}, the other case.

\subsection{Non-periodic Case}
Let $b = \border(T')$ and $c = \cover(b)$.
By the first statement of Lemma~\ref{lem:cover_properties},
$\cover(T') = \cover(\border(T'))$
if $\cover(\border(T'))$ can cover $T'$, and
$\cover(T') = T'$ otherwise.
In the following, we consider how to determine whether 
$c = \cover(\border(T'))$ is a cover of $T'$ or not.

Let $s$ be the maximum length of the prefix of $Lw$ that $c$ can cover.
Further let $t$ be the maximum length of the suffix of $wR$ that $c$ can cover if $|c| \le |wR|$,
and $t = |c|$ otherwise.
By Lemma~\ref{lem:range_array}, the values of $s$ and $t$ can be obtained in $O(\beta(n, |w|) + |w|)$ time
by computing values of $\range(Lw, |c|)$ and $\range((wR)^R, |c|)$
since $c = \cover(b)$ is superprimitive~(by the second statement of Lemma~\ref{lem:cover_properties}),
where $(wR)^R$ denotes the reversal of $wR$.
If $s + t \ge |T'|$ then $c$ is a cover of $T'$.
Thus, $\cover(T') = c$, and the algorithm is terminated.

We consider the other case, where  $s + t < |T'|$.
The inequality $s + t < |T'|$ means that
the occurrences of $c$ within $Lw$ or $wR$ cannot cover the middle factor $T'[s+1..|T'|-t]$ of $T'$.
Thus, if $c$ is a cover of $T'$ when $s+t<|T'|$,
then $c$ must have an occurrence that starts in $L$ and ends in $R$.
Such an occurrence can be written as a concatenation of
some border of $Lw$ which is longer than $w$ and some (non-empty) prefix of $R$.
Similar to the method in Section~\ref{subsec:longborder},
we group the borders of $Lw$ using their periods and process them for each group.
Again, let $G_1, \ldots, G_m$ be the groups 
sorted in descending order of their smallest periods.

Let us fix a group $G_k$ arbitrarily.
If $|G_k| \le 2$, we simply try to extend each border in $G_k$ to the right by using LCE queries.
Now we use the following claim:
\begin{claim}
  For a border $z$ of $Lw$ with $|z| > |w|$,
  the value of $\lce_{T'}(|z|+1, |Lw|+1)$ can be computed in constant time
  by using the LCE data structure of Lemma~\ref{lem:lce} on $T$.
\end{claim}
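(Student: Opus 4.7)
The plan is to reduce the query to the reasoning already developed for the long border case in Section~\ref{subsec:longborder}. The first step is to observe that the hypothesis $|z| > |w|$ forces the longest border $b_{Lw}$ of $Lw$ to satisfy $|b_{Lw}| \ge |z| > |w|$, so the instance falls squarely in the long border case treated there.

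Next, I recall from the beginning of Section~\ref{subsec:longborder} that in the long border case the factor $w$ always occurs inside $L$: either as a length-$|w|$ suffix of the prefix occurrence of $b_{Lw}$ in $L$ (when $|b_{Lw}| \le |L|$), or at position $|L|-kp_{Lw}+1$ in $L$ (when $|b_{Lw}| > |L|$), where $p_{Lw}$ is the smallest period of $Lw$ and $k$ is the least integer with $kp_{Lw} \ge |w|$. Since $L$ is a factor of the original string $T$, every position of $T'$ that lies inside the inserted copy of $w$ can be rewritten as a position of $T$ whose surrounding context agrees for as long as the comparison requires.

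Using this, the specific query $\lce_{T'}(|z|+1,|Lw|+1)$ decomposes into a constant number of LCE queries on $T$. Concretely, I walk through the pieces of $T' = LwR$ starting from each pointer: the left pointer sits at position $|z|+1$, which is either still inside $L$ or already inside the $w$-segment, while the right pointer starts at the first position of $R$. For each maximal segment in which both pointers lie in positions that translate to positions of $T$, I issue one LCE query on $T$ via Lemma~\ref{lem:lce}, truncated to the segment length; because $T'$ is partitioned into only three pieces ($L$, $w$, $R$), only a bounded number of such sub-queries are needed before either a mismatch is found or the right end of $R$ is reached. Summing the contributions (each $O(1)$) yields $\lce_{T'}(|z|+1,|Lw|+1)$ in $O(1)$ time.

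The expected main obstacle is purely bookkeeping: carefully specifying the translated positions and the truncation length of each sub-LCE, including the boundary where the left pointer crosses from $L$ into $w$ (which is translated to its occurrence in $L$) or from $Lw$ into $R$. The combinatorial heart of the argument is already contained in the observation that $w$ occurs inside $L$, so no further combinatorics on borders or periods is required beyond verifying that we are in the long border case.
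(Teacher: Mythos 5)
Your proposal is correct and follows the same route the paper intends: the paper's proof of this claim is just a pointer back to the observation in Section~\ref{subsec:longborder} that $w$ occurs inside $L$, and you correctly bridge the gap by noting that $|z|>|w|$ forces $|b_{Lw}|\ge|z|>|w|$, placing the query in the long border case, after which the LCE on $T'$ splits into at most a constant number of LCEs on factors of $T$.
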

This claim can be proven by similar arguments as in the first paragraph of Section~\ref{subsec:longborder}.
Thus, the case of $|G_k| \le 2$ can be processed in constant time.
If $|G_k| > 2$, we use the period $p_k$ of borders in $G_k$.
Let $\alpha_k$ be the exponent of the longest prefix of $T'$ with period $p_k$.
Further let $r_{k} = \lce_{T'}(|T'|-|R|-p_k +1,|T'|-|R|+1)$.
Note that $\alpha_kp_k < |c|$ since $c$ is a prefix of $T'$ and is non-periodic.
See also Figure~\ref{fig:non-periodic}.
\begin{figure}[tb]
  \center{\includegraphics[width=0.5\linewidth]{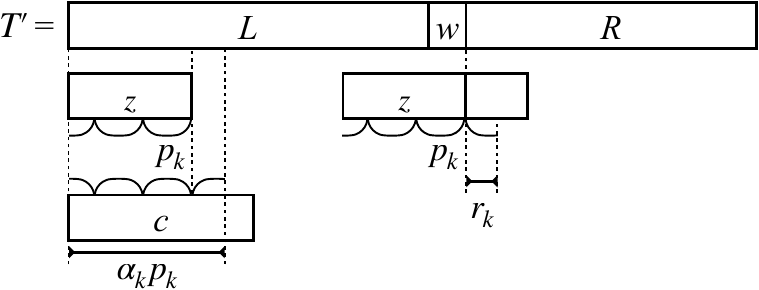}}
  \caption{Illustration for the non-periodic case.
    Here, $c$ is non-periodic, $z$ is some border of $Lw$, and $p_k$ is the period of $z$.
    If there is an occurrence of $c$ starting in $R$ and ending in $R$,
    then $|z| = \alpha_kp_k - r_k$ must hold since $c$ is non-periodic.
  } \label{fig:non-periodic}
\end{figure}
By using LCE queries on $T'$, $\alpha_k$ and $r_k$ can be computed in constant time.
For a border $z$ in $G_k$,
$T'[|z|+1.. |c|] = T'[|Lw|+1.. |Lw|+|c|-|z|]$ holds
only if $|z| = \alpha_kp_k - r_k$.
Thus, the only candidate for a border in $G_k$ which can be extended to the right enough
is of length exactly $\alpha_kp_k - r_k$ if it exists.
The existence of such a border can be determined in constant time
since the lengths of the borders in $G_k$ are represented as an arithmetic progression.
If such a border of length $\alpha_kp_k -r_k$ exists,
then we check whether it can be extended to the desired string $c$ by querying an LCE.
Therefore, the total computation time is $O(1)$ for a single group $G_k$,
and $O(\log n)$ time in total for all groups
since there are $O(\log n)$ groups.

To summarize, we can compute $\cover(T')$ in $O(\beta(n, |w|) + |w| + \log n)$ for the non-periodic case.

\subsection{Periodic Case}
In this case, $T'$ can be written as $(uv)^ku$ for some integer $k \ge 2$ and strings $u, v$ with $|uv| = \period(T')$
since $T'$ is periodic.
By the third statement of Lemma~\ref{lem:cover_properties}, $\cover(T') = \cover(uvu)$ holds
since $uvu$ is a cover of $T'$.
Thus, in the following, we focus on how to compute $\cover(uvu)$.
We further divide this case into two sub-cases depending on the relation between the lengths of $uvu$ and $Lw$.

If $|uvu| \le |Lw|$, then $uvu$ is a prefix of $Lw$.
Thus, by Lemma~\ref{lem:range_array}, $\cover(uvu) = \cover((Lw)[1.. |uvu|])$ can be computed in $O(\beta(n, |w|) + |w|)$ time.

If $|uvu| > |Lw|$, then $T' = uvuvu$ since $|uvu| > n/2$.
We call the factor $T[|uv|+1..|uvu|] = u$ the \emph{second occurrence of $u$}.
Also, since $|L| \ge |R| = |T'| - |Lw| > |T'| - |uvu| = |vu|$,
both $R$ and $L$ are longer than $uv$.
Thus $vu$ is a suffix of $R$ and $uv$ is a prefix of $L$.
Now let us consider the border of $uvu$. 
\begin{lemma}
  If the period of a string $T' = uvuvu$ is $|uv|$,
  then the border of $uvu$ is not longer than $|uv|$.
\end{lemma}
\begin{proof}
  If $uvu$ has a border that is longer than $|uv|$, $uvu$ has a period $p$ which is smaller than $|u|$. 
  Then the length-$p$ prefix of the second occurrence of $u$ repeats to the left and the right until it reaches both ends of $T'$~(see Figure~\ref{fig:uvuvu}).    
  This contradicts that $\period(T') = |uv|$. 
\end{proof}
\begin{figure}[tb]
  \center{\includegraphics[width=0.5\linewidth]{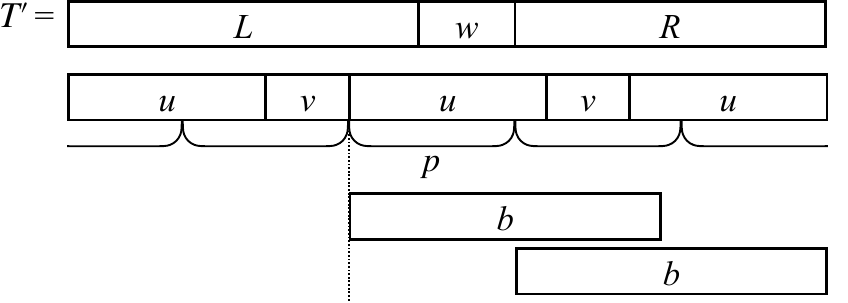}}
  \caption{Illustration for a contradiction if we assume that $uvu$ has a border which is longer than $|uv|$.
    Since $T' = uvuvu$, if $uvu$ has a period which is smaller than $|u|$ then $T'$ also has the same period.
  } \label{fig:uvuvu}
\end{figure}
Therefore, the border of $uvu$ is identical to the longest border of $T$ whose length is at most $|uv|$,
which can be obtained in constant time after $O(n)$-time preprocessing as in step~2 of Section~\ref{sec:border}.
By the first statement of Lemma~\ref{lem:cover_properties},
$\cover(uvu)$ is either $\cover(\border(uvu))$ or $uvu$.
Since $|\border(uvu)| \le |uv| < |Lw|$,  $\cover(\border(uvu))$ can be obtained in $O(\beta(n, |w|) + |w|)$ time by Lemma~\ref{lem:range_array}.
Let $x = \cover(\border(uvu))$.
Thanks to Lemma~\ref{lem:shortestcoverandtange} below, we do not have to scan $O(\log n)$ groups, unlike the non-periodic case.
\begin{lemma}\label{lem:shortestcoverandtange}
  When $|uvu| > |Lw|$,
  string $x = \cover(\border(uvu))$ covers $uvu$ if and only if
  $\range(Lw, |x|) \ge |uvu| - \max\{|u|, |x|\}$ holds.
\end{lemma}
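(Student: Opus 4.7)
The plan is to prove the equivalence by splitting on the relationship between $|x|$ and $|u|$, exploiting three structural facts that set up the whole argument. First, $uvu$ inherits the period $|uv|$ from $T' = uvuvu$. Second, $x = \cover(\border(uvu))$ is both a prefix and a suffix of $uvu$, since $\border(uvu)$ itself is a prefix and a suffix of $uvu$ and $x$ is a cover of $\border(uvu)$. Third, because $|Lw| < |uvu|$, the string $Lw$ is a prefix of $uvu$, so $Lw[1..r] = uvu[1..r]$ for every $r \le |Lw|$. A preliminary calculation using $|x| \le |uv|$ and $|L| > |uv|$ shows $|uvu| - \max\{|u|,|x|\} \le |uv| < |Lw|$, so the threshold we compare against lies strictly inside $Lw$.

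For the $(\Leftarrow)$ direction, set $r = \range(Lw, |x|)$, so that $x$ covers $uvu[1..r]$ using occurrences entirely inside $uvu[1..r]$. If $|x| \ge |u|$, then $r \ge |uvu|-|x|$, and combining the covered interval $[1,r]$ with the suffix occurrence of $x$ at position $|uvu|-|x|+1$ exhausts $[1,|uvu|]$. If $|x| < |u|$, then $r \ge |uv|$, covering $[1, |uv|]$. To extend the cover to $[|uv|+1, |uvu|]$, I would invoke the period $|uv|$ of $uvu$: any occurrence of $x$ at a position $j \le |u|-|x|+1$ shifts by $|uv|$ to a valid occurrence at $j+|uv|$. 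Pairing these shifted occurrences with the suffix occurrence of $x$ at position $|uvu|-|x|+1$ (which, via period pull-back, also yields an occurrence at $|u|-|x|+1$) handles the second $u$-block.

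For the $(\Rightarrow)$ direction, assume $x$ covers $uvu$, and set $N = |uvu| - \max\{|u|,|x|\}$. Since $N < |Lw|$, I must show that $x$ covers $uvu[1..N]$ via occurrences in $uvu[1..N]$. If $|x| \ge |u|$, then $N = |uvu|-|x|$, and for each $p \in [1,N]$ any occurrence of $x$ covering $p$ in $uvu$ either already ends by $N$ or, by a period-shift, can be replaced with one that does. If $|x| < |u|$, then $N = |uv|$, and the four distinguished occurrences of $x$ at positions $1$, $|u|-|x|+1$, $|uv|+1$, and $|uvu|-|x|+1$ (the last two obtained from the first two by a $|uv|$-shift) allow each $p \in [1, |uv|]$ to be covered by an occurrence lying in $uvu[1..|uv|]$.

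The hardest step will be the period-shifting bookkeeping in the $|x| < |u|$ subcases, in particular checking that no position in the boundary zones (between the first $u$ and $v$, or between $v$ and the second $u$) is left uncovered when we restrict occurrences to an allowed prefix. I expect this to be settled by combining periodicity with the explicit occurrences of $x$ noted above, which together with the hypothesized cover of $uvu[1..|uv|]$ tile the whole of $uvu$.
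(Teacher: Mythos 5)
Your $(\Leftarrow)$ direction is essentially sound and matches the paper in spirit. The $(\Rightarrow)$ direction, however, has a genuine gap: the period-shift mechanism you propose does not work, and the argument you would need is missing.

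Concretely, take $|x|\ge|u|$, so $N=|uvu|-|x|$. Suppose an occurrence of $x$ at position $j$ covers some $p\le N$ but extends past the threshold (past $|Lw|$, which is what actually matters for $\range(Lw,|x|)$, not past $N$). Then $j\le p\le N=|uvu|-|x|$, and shifting this occurrence by $-|uv|$ lands at $j-|uv|> |u|-2|x|+1\le 1-|x|\le 0$, i.e.\ possibly off the left end; shifting by $+|uv|$ needs $j\le |u|-|x|+1\le 1$, i.e.\ it only works for $j=1$. So in general neither shift produces a valid occurrence, and the offending occurrence cannot be ``normalized'' into $Lw$. The same problem arises for $|u|/2<|x|<|u|$: the occurrence covering the first uncovered position $r+1$ starts at some $j\le r+1\le|uv|$ and ends beyond $|Lw|$; a $-|uv|$ shift takes it out of bounds.

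What is actually needed here, and what the paper uses, is the \emph{non-periodicity} of $x$ (second statement of Lemma~\ref{lem:cover_properties}), together with the quantitative bound $|Lw|\ge |T'|/2=|uv|+|u|/2$ that follows from the standing assumption $|L|\ge|R|$. Any occurrence of $x$ that starts at some $j\le N$ and does not end within $Lw$ must overlap a distinguished occurrence of $x$ (the one at $|uv|+1$ when $|x|\le|u|$, or the suffix occurrence at $|uvu|-|x|+1$ when $|x|>|u|$) by more than $|x|/2$ positions; this forces $x$ to have a period shorter than $|x|/2$, contradicting that $x$ is a shortest cover and hence non-periodic. Your proposal never invokes non-periodicity of $x$, and also never uses $|L|\ge|R|$, yet both are indispensable. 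Relatedly, the paper's three-way split on $|x|$ versus $|u|/2$ and $|u|$ is not cosmetic: when $|x|\le|u|/2$ one gets the stronger and simpler fact that \emph{every} occurrence of $x$ touching $[1,|uv|]$ already ends inside $Lw$ (so no contradiction argument is needed), whereas for $|u|/2<|x|$ the non-periodicity contradiction is essential. Your two-way split $|x|\ge|u|$ versus $|x|<|u|$ collapses these regimes and loses the case analysis that makes the forward implication go through.

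Finally, a smaller point: $\range(Lw,|x|)\ge N$ does \emph{not} mean ``$x$ covers $uvu[1..N]$ via occurrences inside $uvu[1..N]$''; it means there exists some $r\ge N$ with $r\le|Lw|$ such that $x$ covers $Lw[1..r]$. So the relevant constraint on the covering occurrences is that they end by $|Lw|$, not by $N$. This mis-identification of the threshold is part of why the period-shift bookkeeping you anticipate does not close.
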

\begin{proof}
  Let $r = \range(Lw, |x|)$.
  We divide the proof into three cases.
  \begin{description}
    \item[The case when $|x| \le |u|/2$:]
      In this case, $x$ is a border of $u$ and the occurrence of $x$ as the prefix of the second occurrence of $u$ ends within $Lw$.
      ($\implies$) If $x$ covers $uvu$, then $r \ge |uv| + |x| > |uv| = |uvu| - |u|$.
      ($\impliedby$) If $r \ge |uvu| - |u| = |uv|$ holds, then $x$ covers $uvx$ and $u$. Hence $x$ covers $uvu$~(see the left figure of Figure~\ref{fig:lastlemma}).
    \item[The case when $|u|/2 < |x| \le |u|$:]
      In this case, $x$ is a border of $u$ and its prefix-suffix occurrences in $u$ share the center position $\lceil |u|/2 \rceil$ of $u$.
      ($\implies$) Assume the contrary, i.e., $x$ covers $uvu$ and $r < |uv|$.
      Since $x$ covers $uvu$, there exists an occurrence of $x$ that covers position $r+1$.
      Also, since $r < |uv|$, the occurrence does not end within $Lw$.
      Thus, the occurrence must cover the center position $\lceil |u|/2 \rceil$ of the second occurrence of $u$.
      Now, there are three distinct occurrences of $x$ that cover the same position $\lceil |u|/2 \rceil$,
      however, it contradicts that $x = \cover(\border(uvu))$ is non-periodic (the second statement of Lemma~\ref{lem:cover_properties}).
      ($\impliedby$) Similar to the previous case, if $r \ge |uvu| - |u| = |uv|$ holds, then $x$ covers $uvx$ and $u$. Hence $x$ covers $uvu$.
    \item[The case when $|x| > |u|$:]
      Let $s = |uvu| - |x|$.
      In this case,
      $x$ occurs at positions $s+1$ and $|uv|+1$.
      Thus, the occurrences share position $|uv|+1$, which is the first position of the second occurrence of $u$~(see the right figure of Figure~\ref{fig:lastlemma}).
      ($\implies$) Assume the contrary, i.e., $x$ covers $uvu$ and $r < s$.
      Similar to the previous case,
      there must be an occurrence of $x$ such that
      the occurrence covers position $r+1$ and does not end within $Lw$.
      Then, there are three distinct occurrences of $x$ that cover the same position $|uv|+1$,
      which leads to a contradiction with the fact that $x$ is non-periodic.
      ($\impliedby$) This statement is trivial by the definitions and the conditions.
  \end{description}
\end{proof}
\begin{figure}[tb]
  \center{\includegraphics[width=\linewidth]{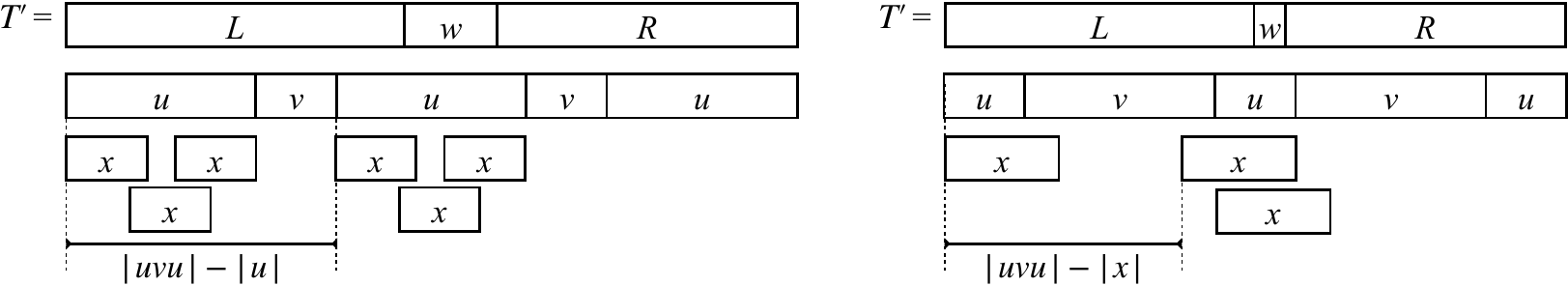}}
  \caption{Left: Illustration for the case $|x| \le |u|$.
    Right: Illustration for the case $|x| > |u|$.
  } \label{fig:lastlemma}
\end{figure}
Therefore, 
if $\range(Lw, |x|) \ge |uvu| - \max\{|u|, |x|\}$ then the cover of $uvu$ is $x$.
Otherwise, the cover of $uvu$ is $uvu$ itself.
Further, by Lemma~\ref{lem:range_array}, the value of $\range(Lw, |x|)$ can be obtained in $O(\beta(n, |w|) + |w|)$ time
since $x = \cover(\border(uvu))$ is superprimitive and $|x| \le |uv| < |Lw|$.

To summarize, we can compute $\cover(T')$ in $O(\beta(n, |w|) + |w|)$ for the periodic case.

Finally, we have shown the main theorem of this paper:
\begin{theorem}
  The shortest cover after-edit query can be answered in $O(\beta(n, \ell) + \ell + \log n)$ time after $O(n)$-time preprocessing,
  where $\ell$ is the length of the string to be inserted or substituted specified in the query.
\end{theorem}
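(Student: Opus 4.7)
The plan is to reduce the \SCAE{} query to a two-stage case analysis layered on top of the \LBAE{} machinery of Section~\ref{sec:border} and the range-array simulation of Lemma~\ref{lem:range_array}. First I would invoke the \LBAE{} algorithm to obtain $\border(T')$ in $O(\ell + \log n)$ time. By Lemma~\ref{lem:cover_properties}(1), $\cover(T')$ is either $\cover(\border(T'))$ or $T'$ itself, so it suffices to identify a single superprimitive candidate and test whether it actually covers $T'$. I would then branch on the periodicity of $T'$: either $|\border(T')| < |T'|/2$ (non-periodic case) or $|\border(T')| \ge |T'|/2$ (periodic case).

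In the non-periodic case, I would compute the candidate $c = \cover(\border(T'))$ and, via Lemma~\ref{lem:range_array} applied to $Lw$ and to the reversal of $wR$, measure the longest prefix of $Lw$ and longest suffix of $wR$ that $c$ covers. If these two ranges together span $T'$, we are done. Otherwise, the only way $c$ can still cover $T'$ is through an occurrence straddling the edit region, i.e., one whose prefix is a border of $Lw$ longer than $w$ and whose suffix is a prefix of $R$. Grouping the borders of $Lw$ by their smallest period as in Section~\ref{subsec:longborder} gives $O(\log n)$ groups; within each group, I would use the exponent $\alpha_k$ and the value $r_k$ plus LCE queries on $T'$ (which, by the observation that $w$ occurs inside $L$, reduce to $O(1)$ LCE queries on $T$) to test at most two candidate border lengths, for $O(\log n)$ time overall.

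In the periodic case I would write $T' = (uv)^k u$ with $|uv| = \period(T')$ and, by Lemma~\ref{lem:cover_properties}(3), reduce to computing $\cover(uvu)$. If $|uvu| \le |Lw|$ then $uvu$ is a prefix of $Lw$ and Lemma~\ref{lem:range_array} returns its cover in $O(\ell)$ time. If $|uvu| > |Lw|$, a short period argument shows $|\border(uvu)| \le |uv|$ (otherwise $T'$ would inherit a period shorter than $|uv|$, contradicting the choice of $uv$), so $\border(uvu)$ coincides with the longest border of a prefix of $T$ of length $\le |uv|$ and is retrievable in $O(1)$ from an auxiliary precomputed table. Computing $x = \cover(\border(uvu))$ by Lemma~\ref{lem:range_array}, I would then appeal to the structural Lemma~\ref{lem:shortestcoverandtange} to decide whether $x$ covers $uvu$ using a single $\range(Lw, |x|)$ query, returning $x$ if so and $uvu$ otherwise.

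The hard part will be the periodic subcase $|uvu| > |Lw|$: the object whose cover we want extends past the region $Lw$ on which our online cover-array simulation operates, so Lemma~\ref{lem:range_array} cannot be invoked on $uvu$ directly. The key is Lemma~\ref{lem:shortestcoverandtange}, which localizes the global covering question to a single range query on $Lw$ by exploiting the superprimitivity and non-periodicity of $x$ together with a center-position argument ruling out three distinct occurrences of $x$ over the same position. Once this reduction is secured, every branch runs in $O(\ell + \log n)$ time, and the preprocessing stage consists of computing $\B_T$, $\C_T$, $\R_T$, the LCE and IPM structures, and the ``longest border of $T[1..n]$ of length $\le k$'' table, each in $O(n)$ time, completing the proof of the theorem.
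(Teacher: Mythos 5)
Your proposal mirrors the paper's own proof almost exactly: first obtain $\border(T')$ via the {\LBAE} machinery, split into the non-periodic and periodic cases by comparing $|\border(T')|$ to $|T'|/2$, handle the non-periodic case by computing $c = \cover(\border(T'))$, two range queries via Lemma~\ref{lem:range_array}, and a per-group scan for straddling occurrences, and handle the periodic case by reducing to $\cover(uvu)$ and invoking Lemma~\ref{lem:shortestcoverandtange} for the sub-case $|uvu| > |Lw|$. The decomposition, the supporting lemmas (Lemma~\ref{lem:cover_properties}, Lemma~\ref{lem:range_array}, Lemma~\ref{lem:shortestcoverandtange}), and the $O(\ell + \log n)$ accounting all coincide with the paper's argument.
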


\section{Conclusions and Discussions} \label{sec:conclusion}
In this paper, we introduced the problem of computing the longest border and the shortest cover in the after-edit model.
For each problem, we proposed a data structure that can be constructed in $O(n)$ time and can answer any query in
$O(\beta(n, \ell) + \ell + \log n) \subseteq O(\ell \log n)$ time
where $n$ is the length of the input string, and $\ell$ is the length of the string to be inserted or replaced.

As a direction for future research, we are interested in improving the running time.
For {\LBAE} queries, when the edit operation involves a single character,
an $O(\log(\min\{\log n, \sigma\}))$ query time can be achieved by exploiting the periodicity of the border:
we pre-compute all \emph{one-mismatch borders} and store the triple of mismatch position, mismatch character, and the mismatch border length for each mismatch border.
The number of such triples is in $O(n)$.
Furthermore, the number of triples for each position is $O(\min\{\log n, \sigma\})$ due to the periodicity of borders.
Thus, by employing a binary search on the triples for the query position, the query time is $O(\log(\min\{\log n, \sigma\}))$.
However, this algorithm stores all mismatch borders and cannot be straightforwardly extended to editing strings of length two or more.
It is an open question whether the query time of {\LBAE} and {\SCAE} queries
can be improved to $O(\ell + \log\log n)$ 
for an edit operation of length-$\ell$ string in general.
Furthermore, applying the results obtained in this paper to a more general problem setting, particularly the computation of borders/covers in a fully-dynamic string,
is a future work that needs further exploration.

\end{document}